\newcommand{\email}[1]{\href{mailto:#1}{\texttt{#1}}}
\theoremstyle{plain}
\newtheorem{theorem}{Theorem}[section]
\newtheorem{proposition}[theorem]{Proposition}
\newtheorem{lemma}[theorem]{Lemma}
\newtheorem{corollary}[theorem]{Corollary}
\theoremstyle{definition}
\newtheorem{definition}[theorem]{Definition}
\newtheorem{assumption}[theorem]{Assumption}
\theoremstyle{remark}
\newtheorem{remark}[theorem]{Remark}
\DeclarePairedDelimiter{\norm}{\lVert}{\rVert} 
\title{Min-Max Optimization Made Simple: 
Approximating the Proximal Point Method via Contraction Maps
}
\def\blfootnote{\xdef\@thefnmark{}\@footnotetext}
\newtheorem*{rep@theorem}{\rep@title}
\newcommand{\newreptheorem}[2]{%
\newenvironment{rep#1}[1]{%
 \def\rep@title{#2 \ref{##1}}%
 \begin{rep@theorem}}%
 {\end{rep@theorem}}}
\def\blfootnote{\xdef\@thefnmark{}\@footnotetext}
\author{%
  Volkan Cevher\\  EPFL\\
  \email{volkan.cevher@epfl.ch}\\
  \And
  Georgios Piliouras\\  SUTD\\
  \email{georgios@sutd.edu.sg}\\
  \AND
  Ryann Sim\\  SUTD\\
  \email{ryann\_sim@mymail.sutd.edu.sg}\\
  \And
  Stratis Skoulakis\\
  EPFL\\
  \email{efstratios.skoulakis@epfl.ch}
  }
\date{}
\begin{document}

\maketitle


\begin{abstract}
In this paper we present a first-order method that admits near-optimal convergence rates for convex/concave min-max problems while requiring a simple and intuitive analysis. Similarly to the seminal work of Nemirovski \cite{N04} and  the recent approach of Piliouras et al.~\cite{PSS21} in normal form games, our work is based on the fact that the update rule of the \textit{Proximal Point method} (PP) can be approximated up to accuracy $\epsilon$ with only $\mathcal{O}(\log 1/\epsilon)$ additional gradient-calls through the iterations of a contraction map. Then combining the analysis of (PP) method with an error-propagation analysis we establish that the resulting first order method, called \textit{Clairvoyant Extra Gradient}, admits near-optimal time-average convergence for general domains and last-iterate convergence in the unconstrained case.  
\end{abstract}
\section{Introduction}
Given a function $f: \mathcal{D}_1 \times \mathcal{D}_2 \mapsto \mathbb R$ we consider the min-max optimization problem
\begin{equation}\label{eq:minmax}
\min_{x \in \mathcal{D}_1}\max_{y \in \mathcal{D}_2} f(x,y)
\end{equation}
where $f(x,y)$ is convex/concave, $f(\cdot, y)$ is convex for all $y \in \mathcal{D}_2$ while $f(x, \cdot)$ is concave for all $x \in \mathcal{D}_1$ and $\mathcal{D}_1$,$\mathcal{D}_2$ are considered to be convex sets. In its approximate version, Problem~\ref{eq:minmax} asks for an $\epsilon$-approximate min-max pair of points $(x^\ast,y^\ast) \in \mathcal{D}_1 \times \mathcal{D}_2$ such that 
\begin{equation}\label{eq:saddle}
f(x^\ast,y) - \epsilon \leq f(x^\ast, y^\ast) \leq f(x,y^\ast) + \epsilon~~~\text{for all }x\in \mathcal{D}_1, y \in \mathcal{D}_2
\end{equation}
where $\epsilon >0$ is an accuracy parameter. Problem~\ref{eq:saddle} admits numerous applications in Game Theory and Convex Optimization, while more recent applications include training of adversarial neural networks \cite{G14,A17,DISZ18}, robust optimization \cite{BGN09} and adversarial examples \cite{MMSTV18}.

First-order methods that only assume black-box access to the gradients $\nabla_x f(x,y)$ and $\nabla_y f(x,y)$ have been the standard choice for tackling Problem~\ref{eq:minmax}. The \textit{Proximal Point method} (PP)\footnote{The proximal point method is a fundamental method in convex optimization which solves the minimization of the cost function $f$ by iteratively solving the subproblem: $x_{k+1} = \min_u \big[f(u)+ \frac{1}{2\gamma} \lVert u-x_k\rVert^2\big]$ with step-size $\gamma$ and where the norm is the $\ell_2$ norm. 
In our setting this is equivalent to (\refeq{eq:PP}) and we will use this formulation for the rest of the paper.} has been one of the most intuitive and influential first-order methods for min-max optimization~\cite{moreau1965proximite,M70,R76}. More precisely, given a pair of points $(x_t,y_t)$ (PP) selects a new pair of points $(x_{t+1},y_{t+1})$ such that
\begin{equation}\label{eq:PP}
x_{t+1} = \left[x_t - \gamma \nabla_x f(x_{t+1}, y_{t+1} ) \right]_{\mathcal{D}_1}~~~~~ y_{t+1} = \left[y_t + \gamma \nabla_y f(x_{t+1}, y_{t+1} ) \right]_{\mathcal{D}_2}.  
\tag{PP}
\end{equation}

Unfortunately \eqref{eq:PP} is an \textit{implicit method} in the sense that its update rule requires the solution of a fixed-point problem that in its full generality cannot be efficiently computed \footnote{Notice that Equation~\eqref{eq:PP} contains $x^{t+1},y^{t+1}$ in both left and right hand sides.}. On the other hand, \eqref{eq:PP} admits optimal convergence rates\footnote{under the black-box access model \cite{R76}.} that can be established through simple and intuitive arguments. As a result, \eqref{eq:PP} has served as a very useful road-map for the design and the analysis of implementable first-order methods \cite{N04,MOP20,MS10,GPDO20}. It is known that \eqref{eq:PP} admits $\Theta(1/T)$ time-average convergence ($(\hat{x},\hat{y}):= \sum_{t=0}^{T-1}(x_t,y_t)/T$) while Golowich et al. \cite{GPDO20} recently established $\Theta(1/\sqrt{T})$ last-iterate convergence in the unconstrained case with a very intuitive argument.

The long-standing line of research on first-order methods for min-max optimization has led to implementable methods that achieve the convergence rates of \eqref{eq:PP}. Two notable examples include the \textit{Extra-Gradient method} (EG) initially proposed by Korpelevich \cite{K76} and the \textit{Optimistic Gradient Descent/Ascent method} (OGDA) initially proposed by Popov \cite{P80}. For example, given a pair of points $(x_t,y_t)$, \eqref{eq:EG} updates $(x_{t+1},y_{t+1})$ such that
\begin{equation*}\tag{EG}
\label{eq:EG}
\begin{gathered}
      x_{t+1/2} = \left[x_t - \gamma \nabla_x f(x_{t}, y_{t} ) \right]_{\mathcal{D}_1}~~~~~ y_{t+1/2} = \left[y_t + \gamma \nabla_y f(x_{t}, y_{t} ) \right]_{\mathcal{D}_2}   \\
      x_{t+1} = \left[x_t - \gamma \nabla_x f(x_{t+1/2}, y_{t+1/2} ) \right]_{\mathcal{D}_1}~~~~~ y_{t+1} = \left[y_t + \gamma \nabla_y f(x_{t+1/2}, y_{t+1/2} ) \right]_{\mathcal{D}_2} 
    \end{gathered}
\end{equation*}
Both (EG) and (OGDA) have been extensively studied and it is known that both of them admit $O(1/T)$ time-average convergence and $O(1/\sqrt{T})$ last-iterate convergence~\cite{N04,RS13,MOP20,GPDO20,COZ22,MS10}. Moreover both (EG) and (OGDA) can be interpreted as approximations of \eqref{eq:PP}, something that has been highlighted in many works~\cite{N04,RS13,MOP20,GPDO20,COZ22,MS10}. However, this approximation remains mostly an intuitive connection since the analysis of both (EG) and (OGDA) does not go directly through (PP) by explicitly handling the error term.
\subsubsection*{Our Contributions and Techniques}
We present a first-order method which we call \textit{Clairvoyant Extra-Gradient} (CEG) that comes as a very straightforward approximation of the \textit{Proximal Point method} and admits a simpler and conciser analysis  than the respective analysis of (EG) and (OGDA).

Similarly to the seminal work of Nemirovski \cite{N04} and  the recent approach of Piliouras et al.~\cite{PSS21} in normal form games, the cornerstone idea behind (CEG) is that for sufficiently small $\gamma$, the update rule of PP can be approximated within  $\Theta\left(\mathrm{poly}(\epsilon)\right)$ accuracy with only $\Theta(\log 1/\epsilon)$ extra gradient-calls through the iterations of a contraction map\footnote{$\gamma < 1/L$ where $L$ is the smoothness constant of $f(x,y)$.}. More precisely, given a pair of points $(x_t,y_t)$ (CEG) uses $\Theta(\log T)$ additional gradient-calls to compute a pair of points $(x_{t+1},y
_{t+1})$ such that
\begin{equation*}
\label{eq:CEG}
\norm{x_{t+1} - \left[x_t - \gamma \nabla_x f(x_{t+1}, y_{t+1} ) \right]_{\mathcal{D}_1}}\leq \frac{1}{\mathrm{poly}(T)}~,~~~\norm{y_{t+1} - \left[y_t + \gamma \nabla_y f(x_{t+1}, y_{t+1} ) \right]_{\mathcal{D}_2}}\leq \frac{1}{\mathrm{poly}(T)}.
\end{equation*}
Once the latter is established, the analysis of Proximal Point can be easily modified to show that Clairvoyant Extra-Gradient admits $\Theta(\log T /T)$ time-average and $\Theta(\log T /\sqrt{T})$ last-iterate convergence\footnote{the $\log T$ extra terms are due to the intermediate additional gradient-calls required by the iterations of the contraction map.}.


\begin{remark}
The idea of approximating the Proximal Point method through the iterations of a contraction map appears in a recent work in the context of online learning in games \cite{PSS21}. Piliouras et al. \cite{PSS21} proposed the \textit{Clairvoyant MWU dynamics}, which are able to \textit{predict}
in certain time-steps the joint behavior of the selfish agents by associating the update rule (that the agents independently run) with the iterations of a contraction map. As the authors show, this results in a $\log T$-sparse subsequence of play in which all agents admit $\mathcal{O}(1)$ regret which implies a $\mathcal{O}(\log T / T)$ convergence rate to \textit{Coarse Correlated Equilibrium}. The name \textit{Clairvoyant Extra Gradient} comes from the name \textit{Clairvoyant MWU} can be interpreted as an efficient approximation (based on iterating a contraction map) of the Proximal Point method with entropic regularizer when the feasibility set is the product of simplices.
\end{remark}
\begin{remark}
The idea of approximating the Proximal Point method through the iterations of a contraction map first appeared in the seminal work of Nemirovski establishing the $\mathcal{O}(1/T)$ convergence rate of the famous \textit{Mirror-Prox method} \cite{N04}. Nemirovksi mentions that \textit{"the prox-method becomes implementable: for
all computational purposes, its step requires a small number of fixed point iterations"} \cite{N04} and then he proves for the case of sets with bounded diameter only $2$ fixed-point iterations are needed. 
Our work indicates that by formalizing the above argument simpler proofs can be derived for both for time-average and last-iterate convergence.   
\end{remark}

\subsubsection*{Our Results}
Up next we summarize the convergence results for Clairvoyant Extra-Gradient and compare them with the results of existing for (EG) and (OGDA).
\begin{itemize}
    \item In Section~\ref{s:bounded} we establish that Clairvoyant Extra-Gradient admits $\Theta(\log T /T)$
    time-average convergence when $\mathcal{D}_1,\mathcal{D}_2$ admit bounded diameter. The $\Theta(1/T)$ time-average convergence was established in \cite{N04} for (EG) and in \cite{RS13} for (OGDA) (for bounded sets).

\item In Section~\ref{s:unbounded} we extend the $\Theta(\log T /T)$ convergence rate of Clairvoyant Extra-Gradient when $\mathcal{D}_1,\mathcal{D}_2$ admit unbounded diameter. The $\Theta(1/T)$ time-average convergence for (EG) was established in \cite{MS10}.

\item In Section~\ref{s:last-iterate} we establish the $\Theta(\log T /\sqrt{T})$ last-iterate convergence of Clairvoyant Extra-Gradient when $\mathcal{D}_1 = \mathbb{R}^n, \mathcal{D}_2 = \mathbb{R}^m$. Golowich et al. \cite{GPDO20} were the first to establish $\Theta(1/\sqrt{T})$ last-iterate convergence for (EG) when $\mathcal{D}_1 = \mathbb{R}^n, \mathcal{D}_2 = \mathbb{R}^m$.
\end{itemize}
\noindent Sections~\ref{s:bounded},~\ref{s:unbounded}~and~\ref{s:last-iterate} are self-contained and can be read independently.


\subsubsection*{Further Related Work}
The literature on the convergence properties of first-order methods in the convex/concave setting (or equivalently, monotone variational inequalities) is vast and thus impossible to review thoroughly in the context of this paper. In the following section, we present a small subset of recent works which will hopefully give the reader an idea of the overall picture of current research in this direction.

\textbf{Time-Average Convergence:} Nemirovski et al. \cite{N04} were the first to establish that (EG) admits $\Theta(1/T)$-time average convergence for bounded feasibility sets while Rakhlin et al. \cite{RS13} established the same results for (OGDA) \footnote{The above results refer to \textit{MirrorProx} and \textit{Optimistic Mirror Descent} which are the respective generalizations of (EG) and (OGDA).}. Monteiro et al. \cite{MS10} extended the convergence results for (EG) for the sets with unbounded diameter by using a different termination rule based on enlargement of the operator. 
For the special case $\mathcal{D}_1 = \mathbb{R}^n,\mathcal{D}_2 = \mathbb{R}^m$, Mohtari et al. \cite{MOP20} recently proposed a simpler convergence analysis for (EG)
by associating (EG) with (PP). Using the same approach, Mohtari et al. \cite{MOP20} were also able to establish $\Theta(1/T)$ time-average convergence for (ODGA) method when $\mathcal{D}_1 = \mathbb{R}^n,\mathcal{D}_2 = \mathbb{R}^m$. Moreover Antonakopoulos et al. \cite{AP21} study the convergence properties of  various first-order methods in the presence of noise while Antonakopoulos et al. \cite{AB21} provide adaptive first-order methods with optimal rates. Other recent works establishing convergence properties of (EG) or (OGDA) or variants thereof include \cite{GBVVL19,NO09,DP18,LS19,DISZ18,FMPV22}.

\textbf{Last-Iterate Convergence:} Daskalakis et al. \cite{DP18} and Liang et al. \cite{LS19} established that (OGDA) admits last-iterate convergence in the unconstrained case. Later, Daskalakis et al. \cite{DP19} extended the last-iterate results for \text{Optimistic MWU} for products of simplices. Golowich et al. \cite{GPDO20}  were the first to establish $\Theta(1/\sqrt{T})$ last-iterate convergence for (EG) when $\mathcal{D}_1 = \mathbb{R}^n,\mathcal{D}_2 = \mathbb{R}^m$, and indeed they proved that this rate is tight for a general class of first-order methods. Later, Gorbunov et al. \cite{GLG22} extended the last-iterate results of Golowich et al. \cite{GPDO20} by removing the assumption on the  Lipschitzness of the Jacobian of the operator. Very recently, Cai et al. \cite{COZ22} established $\Theta(1/\sqrt{T})$ last-iterate convergence for both (EG) and (ODGA) for general convex sets.

Going beyond the convex/concave setting, Diakonikolas et al. \cite{DDJ21} and Pethick et al. \cite{P22} established convergence for variations of (EG) for \textit{weak Minty variational inequalities}, while Mertikopoulos et al. \cite{MLZ19} established convergence properties for \textit{coherent} saddle-point problems. Finally the idea of efficiently approximating the Proximal Point method has been used in the context of Halpern iterations \cite{D20} and the catalyst framework \cite{YZKH20}.

\section{Preliminaries}
We denote with $\norm{z} \in \mathbb{R}^n$ the Euclidean norm of $z$, i.e. $\norm{z} = \sqrt{\sum_{i=1}^n z_i^2}$ and with $\left[ z \right]_{\mathcal{D}}$ the Euclidean projection of $z$ to the set $\mathcal{D}$, i.e. $\left[ z \right]_{\mathcal{D}} := \text{argmin}_{z' \in \mathcal{D}}\norm{z - z'}$. Moreover we denote with $\mathcal{B}(z,\rho)$ the ball of radius $\rho >0$ centered at $z$, i.e. $\mathcal{B}(z,\rho)= \{z' \in \mathbb{R}^n:~ \norm{z - z'}\leq \rho\}$. We finally denote with $|\mathcal{D}|$ the diameter of set $\mathcal{D}$, i.e. $|\mathcal{D}|:= \max_{z,z' \in \mathcal{D}}\norm{z - z'}$. Up next, we formally present the assumptions on $f(x,y)$.

\begin{assumption}[Convex/Concave]
The function $f: \mathcal{D}_1 \times \mathcal{D}_2 \mapsto \mathbb{R}$ is convex with respect to $x \in \mathcal{D}_1$ and concave with respect to $y\in \mathcal{D}_2$ iff
for all $(x,y) \in \mathcal{D}_1 \times \mathcal{D}_2$,
\begin{itemize}
    \item $f(x',y) \geq f(x,y) + \langle \nabla_x f(x,y), x' -x \rangle~~~~$ for all $x' \in \mathcal{D}_1$
    
    \item $f(x,y') \leq f(x,y) + \langle \nabla_y f(x,y), y' -y \rangle~~~~$ for all $y' \in \mathcal{D}_2$
\end{itemize}
\end{assumption}

\begin{assumption}[Smoothness]
The function $f: \mathcal{D}_1 \times \mathcal{D}_2 \mapsto \mathbb{R}$ is $L$-smooth with respect to $x \in \mathcal{D}_1$ and $L$-smooth with respect to $y\in \mathcal{D}_2$ if
for all $(x,y) \in \mathcal{D}_1 \times \mathcal{D}_2$,
\begin{itemize}
        \item $\norm{ \nabla _x f(x , y) - \nabla _x f(x' , y)} \leq L  \norm{x - x'}~~~~~~$     for all $x' \in \mathcal{D}_1$
    
        \item $\norm{ \nabla _y f(x , y) - \nabla _y f(x , y')} \leq L  \norm{y - y'}~~~~~~$     for all $y' \in \mathcal{D}_2$ 
\end{itemize}
\end{assumption}

\noindent To simplify notation, we follow standard variational inequality notation. We denote $\mathcal{D}:=\mathcal{D}_1 \times \mathcal{D}_2$ and $z = (x,y) \in \mathcal{D}$ as the decision variable. 
\begin{definition}\label{d:operator}
Given a function $f:\mathcal{D}_1 \times \mathcal{D}_2 \mapsto \mathbb{R}$, consider the operator $F:\mathcal{D}\mapsto \mathcal{D}$ defined as
\begin{equation}\label{eq:operator}
F(z):= \left( \nabla_x f(x,y) , -\nabla_y f(x,y)\right)
\end{equation}
\end{definition}
\noindent Up next we reformulate Problem~\ref{eq:minmax} using the notation that we introduced above.
\begin{proposition}\label{prop:eq_PP}
A pair of points $z^\ast = (x^\ast,y^\ast)$ is a min-max solution for Problem~\ref{eq:minmax} if and only if 
\begin{equation}\label{eq:minmax_equivalent}
\left \langle F(z^\ast) , z^\ast -z \right \rangle \leq 0~~~\text{for all }z\in \mathcal{D}.   
\end{equation}
\end{proposition}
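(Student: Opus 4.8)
The plan is to unpack both sides of the claimed equivalence into the first-order (in)equalities they encode and observe that these match term by term. Recall that, setting $\epsilon = 0$ in \eqref{eq:saddle}, the statement ``$z^\ast = (x^\ast, y^\ast)$ is a min-max solution of Problem~\ref{eq:minmax}'' means precisely that $x^\ast$ minimizes $f(\cdot, y^\ast)$ over $\mathcal{D}_1$ and $y^\ast$ maximizes $f(x^\ast, \cdot)$ over $\mathcal{D}_2$. On the other hand, by Definition~\ref{d:operator} the inner product in \eqref{eq:minmax_equivalent} splits, for $z = (x,y)$, as
\[
\langle F(z^\ast), z^\ast - z \rangle = \langle \nabla_x f(x^\ast, y^\ast), x^\ast - x \rangle - \langle \nabla_y f(x^\ast, y^\ast), y^\ast - y \rangle .
\]
So it suffices to show that this expression is $\le 0$ for all $z \in \mathcal{D}$ if and only if those two optimality statements hold.

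For the ($\Leftarrow$) direction I would test the variational inequality against the two ``axis-aligned'' points $z = (x, y^\ast)$ and $z = (x^\ast, y)$. The first choice kills the $y$-term in the displayed decomposition and yields $\langle \nabla_x f(x^\ast, y^\ast), x^\ast - x \rangle \le 0$, i.e. $\langle \nabla_x f(x^\ast, y^\ast), x - x^\ast \rangle \ge 0$, for every $x \in \mathcal{D}_1$; substituting this into the convexity gradient inequality $f(x,y^\ast) \ge f(x^\ast,y^\ast) + \langle \nabla_x f(x^\ast,y^\ast), x - x^\ast\rangle$ gives $f(x,y^\ast) \ge f(x^\ast,y^\ast)$. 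Symmetrically, testing against $z = (x^\ast, y)$ together with the concavity inequality gives $f(x^\ast, y) \le f(x^\ast, y^\ast)$. Combining the two inequalities is exactly \eqref{eq:saddle} with $\epsilon = 0$.

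For the ($\Rightarrow$) direction I would invoke the standard first-order necessary optimality condition for a differentiable function on a convex set: since $x^\ast$ minimizes the differentiable map $x \mapsto f(x,y^\ast)$ over the convex set $\mathcal{D}_1$, evaluating along the feasible segment $x^\ast + t(x - x^\ast)$ and letting $t \downarrow 0$ gives $\langle \nabla_x f(x^\ast, y^\ast), x - x^\ast \rangle \ge 0$, i.e. $\langle \nabla_x f(x^\ast, y^\ast), x^\ast - x \rangle \le 0$, for all $x \in \mathcal{D}_1$. The analogous argument at the maximizer $y^\ast$ gives $-\langle \nabla_y f(x^\ast, y^\ast), y^\ast - y \rangle \le 0$ for all $y \in \mathcal{D}_2$. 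Adding the two inequalities and recognizing the left-hand side via the displayed decomposition yields \eqref{eq:minmax_equivalent}.

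I do not expect a serious obstacle: the proposition is essentially a bookkeeping exercise. The two points that warrant a little care are (i) the sign conventions, and in particular the minus sign attached to $\nabla_y f$ in $F$, which is exactly what aligns the orientation of the $y$-part (coming from ``$y^\ast$ is a maximizer'') with that of the $x$-part; and (ii) in the forward direction, the use of the first-order optimality condition, which relies on the assumed differentiability of $f$ together with the convexity of $\mathcal{D}_1$ and $\mathcal{D}_2$ so that the segments stay feasible. Note that convexity/concavity of $f$ is only needed for the ($\Leftarrow$) direction, while ($\Rightarrow$) uses differentiability alone.
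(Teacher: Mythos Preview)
Your proof is correct. The paper itself does not give a standalone argument for this proposition; it simply asserts that the equivalence ``follows directly by Corollary~\ref{c:1}'', i.e.\ from the inequality $f(x^\ast,y)-f(x,y^\ast)\le \langle F(z^\ast),z^\ast-z\rangle$. That inequality handles the $(\Leftarrow)$ direction in one stroke (if the right-hand side is $\le 0$ for all $z$, specialize to $z=(x^\ast,y)$ and $z=(x,y^\ast)$), which is exactly your axis-aligned test combined with the convexity/concavity inequalities --- the same content, just packaged slightly differently. For the $(\Rightarrow)$ direction, Corollary~\ref{c:1} alone gives only a \emph{lower} bound on $\langle F(z^\ast),z^\ast-z\rangle$ and does not by itself yield the variational inequality; the first-order optimality argument you supply (directional derivative along feasible segments) is the standard way to close this, and in that sense your write-up is more complete than the paper's one-line pointer.
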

\noindent The equivalence between Problem~\ref{eq:minmax} and Problem~\ref{eq:minmax_equivalent} follows directly by Corollary~\ref{c:1}.
\begin{corollary}\label{c:1}
For any $z = (x, y)\in \mathcal{D}$ and $z^\ast = (x^\ast, y^\ast) \in \mathcal{D}$,
\[f(x^\ast , y) - f(x , y^\ast) \leq \left \langle F(z^\ast) , z^\ast -z \right \rangle\]
\end{corollary}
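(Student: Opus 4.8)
The plan is to unfold the definition of the operator $F$ and the inner product, and then apply the convex/concave assumption twice. Writing $z^\ast - z = (x^\ast - x, y^\ast - y)$ and using $F(z^\ast) = (\nabla_x f(x^\ast,y^\ast), -\nabla_y f(x^\ast,y^\ast))$ from Definition~\ref{d:operator}, the right-hand side decomposes as
\[
\left\langle F(z^\ast), z^\ast - z \right\rangle = \left\langle \nabla_x f(x^\ast,y^\ast), x^\ast - x \right\rangle - \left\langle \nabla_y f(x^\ast,y^\ast), y^\ast - y \right\rangle .
\]
So it suffices to lower bound each of the two terms separately by a difference of function values.

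For the first term, I would apply the convexity of $f(\cdot, y^\ast)$ (the first bullet of the Convex/Concave assumption) at the point $(x^\ast, y^\ast)$ with $x' = x$, which gives $f(x, y^\ast) \geq f(x^\ast, y^\ast) + \langle \nabla_x f(x^\ast, y^\ast), x - x^\ast \rangle$, i.e.
\[
\left\langle \nabla_x f(x^\ast,y^\ast), x^\ast - x \right\rangle \;\geq\; f(x^\ast, y^\ast) - f(x, y^\ast).
\]
For the second term, I would apply the concavity of $f(x^\ast, \cdot)$ (the second bullet) at $(x^\ast, y^\ast)$ with $y' = y$, giving $f(x^\ast, y) \leq f(x^\ast, y^\ast) + \langle \nabla_y f(x^\ast, y^\ast), y - y^\ast \rangle$, i.e.
\[
- \left\langle \nabla_y f(x^\ast,y^\ast), y^\ast - y \right\rangle \;\geq\; f(x^\ast, y) - f(x^\ast, y^\ast).
\]

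Adding these two inequalities, the $f(x^\ast, y^\ast)$ terms cancel and we are left with $\left\langle F(z^\ast), z^\ast - z \right\rangle \geq f(x^\ast, y) - f(x, y^\ast)$, which is exactly the claim. There is essentially no obstacle here: the statement is an immediate consequence of first-order convexity/concavity together with the sign convention built into $F$; the only thing to be careful about is matching the sign on the $y$-component (the minus sign in the definition of $F$ is precisely what makes the concavity inequality point the right way). For Proposition~\ref{prop:eq_PP}, one then observes that $z^\ast$ is a min-max solution iff $f(x^\ast, y) - f(x, y^\ast) \leq 0$ for all $z = (x,y)$, and combines this with the corollary in both directions (the forward direction is immediate from the corollary; the reverse uses that plugging $z = (x, y^\ast)$ and $z = (x^\ast, y)$ into \eqref{eq:minmax_equivalent} recovers the two saddle inequalities).
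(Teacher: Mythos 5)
Your proof is correct and follows the same route as the paper: decompose the inner product into its $x$- and $y$-components, apply the first-order convexity inequality to $f(\cdot,y^\ast)$ and the first-order concavity inequality to $f(x^\ast,\cdot)$ at $(x^\ast,y^\ast)$, and add. (Your version even fixes a sign typo in the paper's statement of the concavity step, where the paper writes $\langle \nabla_y f(x^\ast,y^\ast), y^\ast - y\rangle$ instead of $\langle \nabla_y f(x^\ast,y^\ast), y - y^\ast\rangle$.)
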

\begin{proof}
Notice that $\left \langle F(z^\ast) , z^\ast -z \right \rangle = \langle \nabla_x f(x^\ast,y^\ast), x^\ast - x\rangle + \langle \nabla_y f(x^\ast,y^\ast), y - y^\ast \rangle$. By the convexity of the function $f(\cdot,y^\ast)$ we get $f(x^\ast, y^\ast) -  f(x, y^\ast) \leq \langle \nabla_x f(x^\ast,y^\ast), x^\ast - x\rangle$ and by the concavity of the function $f(x^\ast,\cdot)$
we get
 $f(x^\ast, y)-f(x^\ast, y^\ast) \leq \langle \nabla_y f(x^\ast,y^\ast), y^\ast - y\rangle$. 
\end{proof}

\noindent We conclude the section with the monotonicity property of the operator $F$ that is proven in \cite{N04}. 
\begin{lemma}[\cite{N04}]\label{l:monotonicity}
The operator $F: \mathcal{D} \mapsto \mathcal{D}$ of Equation~\ref{eq:operator} is monotone. More precisely, for any $z,z' \in \mathcal{D}$
\[\left \langle F(z) - F(z') , z - z' \right \rangle \geq 0\]
\end{lemma}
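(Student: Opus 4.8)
The plan is to derive monotonicity directly from the first-order characterization of convexity/concavity in the Convex/Concave assumption, applied at four different points. Write $z = (x,y)$ and $z' = (x',y')$ and expand, using $F(z) = (\nabla_x f(x,y), -\nabla_y f(x,y))$,
\[
\langle F(z) - F(z'),\, z - z'\rangle = \langle \nabla_x f(x,y) - \nabla_x f(x',y'),\, x-x'\rangle - \langle \nabla_y f(x,y) - \nabla_y f(x',y'),\, y-y'\rangle .
\]
The strategy is to lower-bound the first inner product and upper-bound the second, each by the \emph{same} combination of function values $f(x,y) + f(x',y') - f(x,y') - f(x',y)$, so that taking the difference annihilates all the function-value terms and leaves a nonnegative quantity.

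For the $x$-part, apply convexity of $f(\cdot,y)$ to the pair $(x,x')$ and convexity of $f(\cdot,y')$ to the pair $(x',x)$:
\[
f(x',y) \ge f(x,y) + \langle \nabla_x f(x,y),\, x'-x\rangle, \qquad f(x,y') \ge f(x',y') + \langle \nabla_x f(x',y'),\, x-x'\rangle .
\]
Adding these two inequalities and rearranging gives
\[
\langle \nabla_x f(x,y) - \nabla_x f(x',y'),\, x-x'\rangle \ \ge\ f(x,y) + f(x',y') - f(x',y) - f(x,y').
\]
Symmetrically, for the $y$-part apply concavity of $f(x,\cdot)$ to the pair $(y,y')$ and concavity of $f(x',\cdot)$ to the pair $(y',y)$; after the analogous rearrangement this yields
\[
\langle \nabla_y f(x,y) - \nabla_y f(x',y'),\, y-y'\rangle \ \le\ f(x,y) + f(x',y') - f(x,y') - f(x',y).
\]

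Subtracting the second bound from the first and substituting into the expansion above, the four function-value terms cancel exactly and we are left with $\langle F(z) - F(z'),\, z - z'\rangle \ge 0$, which is the claim. I do not expect any genuine obstacle here; the only points requiring care are the sign bookkeeping — in particular that $F$ negates $\nabla_y f$, which is precisely why concavity in $y$ (rather than convexity) is the correct hypothesis — and the observation that the mixed values $f(x',y)$ and $f(x,y')$ enter the two bounds with opposite signs, so their contribution telescopes to zero when the bounds are combined. This recovers the statement of \cite{N04}.
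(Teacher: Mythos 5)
Your proof is correct, and it is the standard argument for monotonicity of the saddle-point operator (the paper itself gives no proof, merely citing \cite{N04}). The sign bookkeeping works out exactly as you claim: both the lower bound on the $x$-part and the upper bound on the $y$-part equal $f(x,y)+f(x',y')-f(x',y)-f(x,y')$, so the difference vanishes.
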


\section{Implementing the Proximal Point Method with Contraction Maps}\label{s:contraction}
The goal of this section is to present how the update rule of \eqref{eq:PP} can be efficiently computed once the step-size $\gamma < 1/L$. Before doing so, we briefly illustrate the convergence properties of \eqref{eq:PP}. 

Using the notation introduced in the previous section, the update rule of \eqref{eq:PP} can be equivalently described as follows,
\begin{definition}[Proximal Point update rule]\label{d:PP}
Given $z \in \mathcal{D}$ and a step-size $\gamma >0$, a point $z' \in \mathcal{D}$ satisfies the \textit{proximal point operator}, $z' \in \mathrm{PP}^\gamma(z)$, if and only if 
\begin{equation}\label{eq:PP_operator}
z' = \left[z - \gamma  F\left(z'\right) \right]_{\mathcal{D}}  
\end{equation}
\end{definition}
\noindent As already mentioned the update rule of Proximal Point requires the solution of a \textit{fixed-point problem} (see Equation~\ref{eq:PP_operator}). Solving fixed-point problems not only is (in general) computationally intractable but also given $z\in \mathcal{D}$ and $\gamma >0$ there might be several $z' \in \mathcal{D}$ satisfying Equation~\ref{eq:PP_operator} (or none in case $\mathcal{D}$ is unbounded). As we shall soon see, both problems can be alleviated once $\gamma < 1/L$. To this end we can consider that Proximal Point method selects as $z_{t+1}$ any of the points $z'\in \mathrm{PP}^\gamma(z_t)$. Up next, we briefly present both the time-average and the last-iterate convergence properties of \eqref{eq:PP}.

\begin{theorem}[Folkore]\label{l:time_average_PP}
Consider the sequence $z_0,z_1,\ldots, z_T\in \mathcal{D}$ such that $z_{t+1} \in \mathrm{PP}^\gamma(z_t)$. Then,
\[\sum_{t=0}^{T-1} \langle F(z_{t+1}) , z_{t+1} - z \rangle \leq \frac{|\mathcal{D}|^2}{2 \gamma} \]
for all $z \in \mathcal{D}$. Equivalently the time-averaged pair of points $(\hat{x},\hat{y}):= \sum_{t=0}^{T-1}z_{t+1} / T$ satisfies,
\[f(\hat{x},y) - \frac{|\mathcal{D}|^2}{2\gamma T} \leq f(\hat{x},\hat{y}) \leq f(x,\hat{y}) + \frac{|\mathcal{D}|^2}{2\gamma T}\]
for all $x,y \in \mathcal{D}_1\times \mathcal{D}_2$.
\end{theorem}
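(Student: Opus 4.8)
The plan is to reduce the statement to the variational characterization of the Euclidean projection, a telescoping sum, and Jensen's inequality; in particular, no use of the monotonicity of $F$ (\Cref{l:monotonicity}) is needed, which is part of what makes the proof short. First I would rewrite the defining relation $z_{t+1} = \left[z_t - \gamma F(z_{t+1})\right]_{\mathcal{D}}$ via the first-order optimality condition for projection onto a convex set: since $\left[w\right]_{\mathcal{D}}$ minimizes $\norm{w-\cdot}$ over $\mathcal{D}$, the point $z_{t+1}$ satisfies $\left\langle z_t - \gamma F(z_{t+1}) - z_{t+1},\, z - z_{t+1}\right\rangle \le 0$ for all $z \in \mathcal{D}$. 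Rearranging gives the one-step bound $\gamma\left\langle F(z_{t+1}),\, z_{t+1}-z\right\rangle \le \left\langle z_t - z_{t+1},\, z_{t+1}-z\right\rangle$.

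Next I would apply the three-point identity $\left\langle a-b,\, b-c\right\rangle = \tfrac12\bigl(\norm{a-c}^2 - \norm{a-b}^2 - \norm{b-c}^2\bigr)$ with $a=z_t$, $b=z_{t+1}$, $c=z$, which rewrites the right-hand side as $\tfrac12\bigl(\norm{z_t-z}^2 - \norm{z_{t+1}-z}^2\bigr) - \tfrac12\norm{z_t-z_{t+1}}^2$. Discarding the nonpositive last term and summing over $t=0,\dots,T-1$, the remaining differences telescope, so $\gamma\sum_{t=0}^{T-1}\left\langle F(z_{t+1}),\, z_{t+1}-z\right\rangle \le \tfrac12\norm{z_0-z}^2 \le \tfrac12|\mathcal{D}|^2$ (using $z_0,z\in\mathcal{D}$). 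Dividing by $\gamma$ yields the first displayed inequality.

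For the saddle-gap reformulation, I would apply \Cref{c:1} with its point $z^\ast$ set to $z_{t+1}=(x_{t+1},y_{t+1})$ and its point $z$ set to an arbitrary $z=(x,y)\in\mathcal{D}$, giving $f(x_{t+1},y) - f(x,y_{t+1}) \le \left\langle F(z_{t+1}),\, z_{t+1}-z\right\rangle$; summing over $t$ and invoking the bound just established gives $\sum_{t=0}^{T-1}\bigl(f(x_{t+1},y) - f(x,y_{t+1})\bigr) \le |\mathcal{D}|^2/(2\gamma)$. Then convexity of $f(\cdot,y)$ and concavity of $f(x,\cdot)$ let me pass the averages inside via Jensen, $f(\hat x, y) \le \tfrac1T\sum_{t=0}^{T-1} f(x_{t+1},y)$ and $f(x,\hat y) \ge \tfrac1T\sum_{t=0}^{T-1} f(x,y_{t+1})$, so $f(\hat x,y) - f(x,\hat y) \le |\mathcal{D}|^2/(2\gamma T)$ for all $x,y$; specializing $x=\hat x$ and then $y=\hat y$ produces the two inequalities in the statement. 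I do not expect a genuine obstacle here — this is the folklore argument — the only points needing care being the orientation of the projection inequality and applying the three-point identity to the correct triple.
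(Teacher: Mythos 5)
Your proposal is correct and follows essentially the same route as the paper's proof: projection optimality, the three-point identity, telescoping, and then Corollary~\ref{c:1} plus Jensen for the saddle-gap reformulation. The only difference is that you spell out the Jensen step, which the paper leaves implicit; as a small aside, you also stated the three-point identity with the correct sign ($\langle a-b, b-c\rangle$), whereas the paper's appendix writes $\langle a-b, c-b\rangle$ (a typo, since the ensuing calculation uses the form you wrote).
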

\noindent In their recent work studying the \textit{last-iterate} convergence properties of \eqref{eq:EG}, Golowich et al. \cite{GPDO20} established \textit{last-iterate} convergence for \eqref{eq:PP} when 
$\mathcal{D}= \mathbb{R}^{n+m}$. Notice that in this case the update rule of \eqref{eq:PP} takes the form 
$z' = z - \gamma F(z')$ and that Problem~\ref{eq:minmax} asks for a $z^\ast \in \mathbb{R}^{n+m}$ with $F(z^\ast) =0$. 
\begin{theorem}[\cite{GPDO20}]\label{l:GD21}
Consider the sequence $z_0,z_1,\ldots, z_T\in \mathbb{R}^{n+m}$ such that $z_{t+1} = z_t - \gamma F(z_{t+1})$. In case there exists $z^\ast \in \mathbb{R}^{n+m}$ such that $\norm{F(z^\ast)} =0$ then,
\[\norm{F(z_{T})} \leq \frac{\norm{z_{0} - z^\ast}^2}{\gamma \sqrt{T}}.\]
\end{theorem}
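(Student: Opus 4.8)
To prove Theorem~\ref{l:GD21} I would combine two ingredients: a telescoping ``energy'' inequality that bounds the \emph{sum} $\sum_t \norm{F(z_{t+1})}^2$, and the fact that $\norm{F(z_t)}$ is non-increasing along the trajectory. Together these pin the last residual down to $\norm{F(z_T)}^2 \le \tfrac1T\sum_{t=0}^{T-1}\norm{F(z_{t+1})}^2$, which is exactly the claimed $\Theta(1/\sqrt T)$ behaviour. The only structural facts used are the defining relation $z_{t+1}-z_t=-\gamma F(z_{t+1})$ and monotonicity of $F$ (Lemma~\ref{l:monotonicity}); in particular no restriction on $\gamma$ is needed for this statement.

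\emph{Summable residuals.} First I would use the three-point identity, for the optimal point $z^\ast$ with $F(z^\ast)=0$,
\[\norm{z_{t+1}-z^\ast}^2 = \norm{z_t-z^\ast}^2 + 2\langle z_{t+1}-z^\ast,\, z_{t+1}-z_t\rangle - \norm{z_{t+1}-z_t}^2,\]
and substitute $z_{t+1}-z_t=-\gamma F(z_{t+1})$. By monotonicity, $\langle F(z_{t+1}),\, z_{t+1}-z^\ast\rangle = \langle F(z_{t+1})-F(z^\ast),\, z_{t+1}-z^\ast\rangle \ge 0$, so the middle term is non-positive and $\norm{z_{t+1}-z^\ast}^2 \le \norm{z_t-z^\ast}^2 - \gamma^2\norm{F(z_{t+1})}^2$. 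Summing over $t=0,\dots,T-1$ telescopes to $\gamma^2\sum_{t=0}^{T-1}\norm{F(z_{t+1})}^2 \le \norm{z_0-z^\ast}^2$.

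\emph{Monotone decrease of the residual.} The key observation is that a (PP) step sends $z_t$ to the unique point $z_{t+1}$ solving $z_{t+1}+\gamma F(z_{t+1})=z_t$, i.e.\ it is the resolvent of $\gamma F$, and resolvents of monotone operators are nonexpansive: if $u+\gamma F(u)=a$ and $v+\gamma F(v)=b$, then $\langle a-b,\,u-v\rangle = \norm{u-v}^2 + \gamma\langle F(u)-F(v),\,u-v\rangle \ge \norm{u-v}^2$, so Cauchy--Schwarz gives $\norm{a-b}\ge\norm{u-v}$. Applying this with $(a,b)=(z_t,z_{t-1})$ and $(u,v)=(z_{t+1},z_t)$ yields $\norm{z_{t+1}-z_t}\le\norm{z_t-z_{t-1}}$ for every $t\ge1$, that is $\norm{F(z_{t+1})}\le\norm{F(z_t)}$; hence $\norm{F(z_1)}\ge\norm{F(z_2)}\ge\cdots\ge\norm{F(z_T)}$.

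Combining the two parts, $\norm{F(z_T)}$ is the smallest among $\norm{F(z_1)},\dots,\norm{F(z_T)}$, so $T\norm{F(z_T)}^2 \le \sum_{t=1}^{T}\norm{F(z_t)}^2 = \sum_{t=0}^{T-1}\norm{F(z_{t+1})}^2 \le \norm{z_0-z^\ast}^2/\gamma^2$, and taking square roots gives $\norm{F(z_T)} \le \norm{z_0-z^\ast}/(\gamma\sqrt T)$. I expect the main obstacle to be the second part: spotting that the implicit update is precisely a resolvent and that monotonicity alone suffices for nonexpansiveness, and handling the indexing carefully — the chain of inequalities only begins once both $z_t$ and $z_{t+1}$ are genuine (PP) images, which is exactly the index range $t\ge1$ that appears in the telescoped sum, so no boundary term is lost.
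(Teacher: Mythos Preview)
Your proof is correct and follows essentially the same two-ingredient approach as the paper: a telescoping energy inequality at $z^\ast$ gives $\gamma^2\sum_{t=0}^{T-1}\norm{F(z_{t+1})}^2\le\norm{z_0-z^\ast}^2$, and then the step-size monotonicity $\norm{z_{t+1}-z_t}\le\norm{z_t-z_{t-1}}$ (which the paper proves by direct expansion rather than your resolvent-nonexpansiveness framing, but the computation is the same) forces $\norm{F(z_T)}$ to be the minimum term. Both arguments actually yield the sharper bound $\norm{F(z_T)}\le\norm{z_0-z^\ast}/(\gamma\sqrt{T})$; the square in the stated numerator is a typo in the paper.
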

\begin{remark}
The above convergence results hold no matter the selection of the step-size $\gamma$. This implies that Proximal Point can achieve arbitrarily fast convergence rates by selecting $\gamma \rightarrow \infty$. The latter is possible due to the fact that Proximal Point assumes the solution of the fixed-point problem of Equation~\ref{eq:PP_operator} that as we explained is generally intractable. On the positive side, both the proofs of Theorem~\ref{l:time_average_PP} and Theorem~\ref{l:GD21} are very simple and intuitive. We additionally remark that both Theorem~\ref{l:time_average_PP} and Theorem~\ref{l:GD21} are presented only for the sake of exposition and are not required in the convergence results of \textit{Clairvoyant Extra-Gradient} that are presented in Sections~\ref{s:bounded},~\ref{s:unbounded}~and~\ref{s:last-iterate}. 
\end{remark}
\subsection{Approximating the Proximal Point method with Contraction Maps}
In this section, we present how the update rule of Proximal Point (see Definition~\ref{d:PP}) can be efficiently computed through a contraction map when $\gamma < 1/L$.  
More precisely, in Algorithm~\ref{alg:Extra} we present an algorithm that given an input point $z_{t} \in \mathcal{D}$ and a step-size $\gamma >0$, it outputs a point $z_{t+1} \in \mathcal{D}$ such that 
\[z_{t+1} \simeq \left[z_t - \gamma \cdot  F(z_{t+1}) \right]_{\mathcal{D}}\]
\begin{algorithm}[H]
  \caption{Clairvoyant Extra-Gradient Update Rule}\label{alg:Extra}
  \begin{algorithmic}[1]
  \State \textbf{Input:} Step-size $\gamma >0$, $k \in \mathbb{N}$, $z\in \mathcal{D}$
  \smallskip
\State $w_0 \leftarrow z$
  \ForAll{$m = 1, \cdots, k$}
  \smallskip
  \State $w_m \leftarrow \left[z - \gamma F(w_{m-1}) \right]_{\mathcal{D}}$
 \EndFor
\State \textbf{Return} $w_k$
  \end{algorithmic}
\end{algorithm}
\noindent In Lemma~\ref{l:contraction} we establish the fact that the output of Algorithm~\ref{alg:Extra} is approximately the same as the output of the update rule of the Proximal Point method. We remark that Lemma~\ref{l:contraction} holds for any choice of $\gamma < 1/L$ but we choose to set $\gamma = 1/2L$ for the sake of simplicity. Similar contraction arguments can be found in \cite{N04,PSS21}.
\begin{lemma}\label{l:contraction}
For any $z\in \mathcal{D}$ and $\gamma = 1/2L$ the fixed-point equation $\hat{z} =  \left[z - \gamma \cdot F(\hat{z})\right]_{\mathcal{D}}$ admits a unique solution $\hat{z}$ denoted as $\mathrm{PP}^\gamma(z)$. Moreover,
\[\norm{w_k - \mathrm{PP}^\gamma(z)} \leq \frac{|\mathcal{D}|}{2^k}\]
where $w_k$ is the output of Algorithm~\ref{alg:Extra}.
\end{lemma}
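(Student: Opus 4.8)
The plan is to read Algorithm~\ref{alg:Extra} as a fixed-point iteration. Define the map $\Phi:\mathbb{R}^{n+m}\to\mathcal{D}$ by $\Phi(w) := \left[z - \gamma F(w)\right]_{\mathcal{D}}$. Then the inner loop is exactly $w_m = \Phi(w_{m-1})$, so $w_k = \Phi^k(z)$, while any solution $\hat{z}$ of $\hat{z} = \left[z - \gamma F(\hat{z})\right]_{\mathcal{D}}$ is precisely a fixed point of $\Phi$. Thus everything reduces to showing that $\Phi$ is a $\tfrac12$-contraction of the complete metric space $\mathcal{D}=\mathcal{D}_1\times\mathcal{D}_2$ into itself; uniqueness of $\hat z$ (hence well-definedness of $\mathrm{PP}^\gamma(z)$) and the geometric error bound then both come out of the Banach fixed-point theorem.

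First I would verify the contraction estimate using two ingredients: (i) the Euclidean projection onto a closed convex set is non-expansive, $\norm{[a]_{\mathcal{D}} - [b]_{\mathcal{D}}} \leq \norm{a-b}$; and (ii) the operator $F$ is $L$-Lipschitz, $\norm{F(w)-F(w')}\leq L\norm{w-w'}$, coming from the Smoothness assumption on $\nabla_x f$ and $\nabla_y f$. Combining them, for all $w,w'\in\mathcal{D}$,
\[
\norm{\Phi(w)-\Phi(w')}\leq \norm{\gamma F(w)-\gamma F(w')}\leq \gamma L\norm{w-w'}=\tfrac12\norm{w-w'},
\]
since $\gamma=1/2L$. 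Because $\Phi$ maps $\mathcal{D}$ into $\mathcal{D}$ (its value is a projection onto $\mathcal{D}$) and $\mathcal{D}$ is closed, Banach's theorem yields a unique fixed point $\hat{z}=\mathrm{PP}^\gamma(z)\in\mathcal{D}$. Finally, iterating the contraction bound from $w_0=z$ and using $\hat z=\Phi(\hat z)$,
\[
\norm{w_k-\mathrm{PP}^\gamma(z)}=\norm{\Phi^k(z)-\Phi^k(\hat z)}\leq \left(\tfrac12\right)^k\norm{z-\hat z}\leq \frac{|\mathcal{D}|}{2^k},
\]
where the last step uses that both $z$ and $\hat z$ lie in $\mathcal{D}$, whose diameter is $|\mathcal{D}|$.

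The only genuinely delicate point is ingredient (ii): extracting the $L$-Lipschitz bound on $F(w)=(\nabla_x f(x,y),-\nabla_y f(x,y))$ from the Smoothness assumption, which as literally stated only controls $\nabla_x f$ under perturbations of $x$ and $\nabla_y f$ under perturbations of $y$, and not the cross terms (how $\nabla_x f$ changes when $y$ moves). Making this rigorous requires reading Smoothness as the statement that the full gradient map is $L$-Lipschitz, or absorbing an absolute constant into $\gamma$ (any $\gamma<1/L$ suffices for the contraction argument, so the exact constant is immaterial). Everything else — non-expansiveness of the projection, the Banach argument, and the telescoping of the contraction — is routine.
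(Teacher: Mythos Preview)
Your proposal is correct and matches the paper's own proof essentially line for line: both use non-expansiveness of the projection together with $L$-Lipschitzness of $F$ to get the factor $\gamma L = \tfrac12$, and then iterate to obtain $\norm{w_k - \mathrm{PP}^\gamma(z)} \le 2^{-k}\norm{w_0 - \mathrm{PP}^\gamma(z)} \le |\mathcal{D}|/2^k$. The only difference is that you make the existence/uniqueness step explicit via Banach's fixed-point theorem, whereas the paper simply assumes $\mathrm{PP}^\gamma(z)$ is given and bounds the distance to it; your remark about the cross-term gap in the Smoothness assumption is a valid caveat that the paper silently glosses over.
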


\begin{proof}
Let $\gamma = 1/2L$ and recall that $\mathrm{PP}^\gamma(z) = \left[z - \gamma F(\mathrm{PP}^\gamma(z)) \right]_{\mathcal{D}}$.
\begin{eqnarray*}
\norm{ w_k -  \mathrm{PP}^\gamma(z)}&=& 
\norm{ \left[z - \gamma F(w_{k-1})\right]_{\mathcal{D}} -  \left[z-\gamma F(\mathrm{PP}^\gamma(z))\right]_{\mathcal{D}}}\\
&\leq& \gamma\norm{F(w_{k-1}) - F(\mathrm{PP}^\gamma(z))}\\
&\leq& \gamma L \norm{w_{k-1} - \mathrm{PP}^\gamma(z)}\\
&=& \frac{1}{2} \norm{w_{k-1} - \mathrm{PP}^\gamma(z)}
\leq \frac{1}{2^k}\norm{w_0 - \mathrm{PP}^\gamma(z)} \leq \frac{\mathcal{|D|}}{2^k}.
\end{eqnarray*}
\end{proof}
\noindent To this end, the update rule described in Algorithm~\ref{alg:Extra} together with Lemma~\ref{l:contraction} provides us with a first-order method which is a very straightforward approximation of the Proximal Point method. We call this first-order method \emph{Clairvoyant Extra-Gradient} and we present it in Algorithm~\ref{alg:Extra_Gradient}\footnote{According to whether $\mathcal{D}$ is bounded or unbounded, $k$ admits slightly different forms that we present in the respective sections.}.\\

\begin{algorithm}[!htb]
  \caption{Clairvoyant Extra-Gradient}\label{alg:Extra_Gradient}
  \begin{algorithmic}[1]
  \State \textbf{Input:} $z_0\in \mathcal{D}$
  \smallskip
    \State $\gamma \leftarrow 1/2L$.
\smallskip
  \ForAll{$t = 0, \cdots, T$}
  \smallskip
\State $w_0 \leftarrow z_t$
  \smallskip
  \ForAll{$m = 1, \cdots, k:= \Theta(\log(|\mathcal{D}| T) )$}
  \smallskip
  \State $w_m \leftarrow \left[z_t - \gamma F(w_{m-1}) \right]_{\mathcal{D}}$
 \smallskip
 \EndFor
 \smallskip
\State $z_{t+1} \leftarrow w_{k}$
 \smallskip
 \EndFor
  \end{algorithmic}
\end{algorithm}

\noindent It is rather intuitive to understand why Clairvoyant Extra-Gradient inherits mal Point method described in Theorem~\ref{l:time_average_PP}~and~\ref{l:GD21} (in particular, $\Theta(1/T)$ time-average and $\Theta(1/\sqrt{T})$ last-iterate convergence). The reason is that by selecting $k:= \Theta(\log(|\mathcal{D}| T))$ we ensure that at Step~$8$ of Algorithm~\ref{alg:Extra_Gradient}, 
\begin{equation}\label{eq:error_term}
\norm{z_{t+1} - \left[z_t - \gamma F(z_{t+1}) \right]_{\mathcal{D}}}\leq \frac{1}{\mathrm{poly}(T)}
\end{equation}
which means that over the $T$ iterations of Step~$3$, the trajectories of Proximal Point and Clairvoyant Extra-Gradient are almost identical. Obviously the simple proofs of Theorem~\ref{l:time_average_PP}~and~\ref{l:GD21} do not directly carry over since we need to account for the error terms of Equation~\ref{eq:error_term}. However, these error terms can be handled without significant complications. As a result, the analysis of Clairvoyant Extra-Gradient follows closely the proof-steps of Proximal Point method making its analysis significantly more structured than the respective analysis of Extra-Gradient and Optimistic GDA.
\smallskip

\section{Time-Average Convergence
for Bounded Sets}\label{s:bounded}
In this section we present the time-average convergence properties of Clairvoyant Extra-Gradient in case $\mathcal{D}$ is a bounded convex set. The latter is formalized in Theorem~\ref{t:bounded}.
\begin{theorem}\label{t:bounded}
Let $z_0,z_1,\ldots, z_T$ be the sequence produced by Clairvoyant Extra-Gradient (Algorithm~\ref{alg:Extra_Gradient}) for $\gamma = 1/2L$ and $k := \mathcal{O}\left(\log(\max(L,1)\cdot\max(\norm{F(x_0)},1)\cdot |\mathcal{D}| \cdot T \right))$. Then the time-averaged pair of points $(\hat{x},\hat{y}):= \sum_{t=0}^{T-1}z_{t+1} / T$ satisfies
\[f(x,\hat{y}) - \frac{L|\mathcal{D}|^2+1}{ T}  \leq f(\hat{x}, \hat{y}) \leq f(x,\hat{y}) + \frac{L|\mathcal{D}|^2+1}{T}\]
for all $x,y \in \mathcal{D}$.
\end{theorem}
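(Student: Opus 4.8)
The plan is to mirror the folklore proof of Theorem~\ref{l:time_average_PP} for the Proximal Point method while carefully tracking the error term from Equation~\ref{eq:error_term}. First I would fix notation: let $\hat{z}_{t+1} := \mathrm{PP}^\gamma(z_t) = [z_t - \gamma F(\hat{z}_{t+1})]_{\mathcal{D}}$ be the \emph{exact} proximal point from $z_t$, and let $\varepsilon_t := \norm{z_{t+1} - \hat{z}_{t+1}}$ be the approximation error. By Lemma~\ref{l:contraction} with $w_0 = z_t$ we have $\varepsilon_t \leq |\mathcal{D}|/2^k$, and with the stated choice $k = \mathcal{O}(\log(\max(L,1)\max(\norm{F(x_0)},1)|\mathcal{D}| T))$ we can guarantee $\varepsilon_t \leq 1/\mathrm{poly}(T)$, small enough that all accumulated errors over $T$ steps contribute at most an $O(1/T)$ term.

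The core step is the standard projection/proximal inequality. Since $z_{t+1}$ is (approximately) a projection, or more cleanly working with the exact point $\hat{z}_{t+1} = [z_t - \gamma F(\hat{z}_{t+1})]_\mathcal{D}$, the variational characterization of Euclidean projection gives $\langle z_t - \gamma F(\hat{z}_{t+1}) - \hat{z}_{t+1}, z - \hat{z}_{t+1}\rangle \leq 0$ for all $z \in \mathcal{D}$, i.e. $\gamma \langle F(\hat{z}_{t+1}), \hat{z}_{t+1} - z\rangle \leq \langle z_t - \hat{z}_{t+1}, \hat{z}_{t+1} - z\rangle$. The right-hand side telescopes after using the identity $\langle a - b, b - z\rangle = \frac{1}{2}(\norm{a-z}^2 - \norm{b-z}^2 - \norm{a-b}^2)$ with $a = z_t$, $b = \hat{z}_{t+1}$, so summing over $t = 0,\dots,T-1$ yields $\sum_t \gamma\langle F(\hat{z}_{t+1}), \hat{z}_{t+1} - z\rangle \leq \frac{1}{2}\norm{z_0 - z}^2 \leq |\mathcal{D}|^2/2$, plus error terms of order $\varepsilon_t \cdot |\mathcal{D}|$ arising because $z_{t+1} \neq \hat{z}_{t+1}$ breaks the exact telescoping (the term $-\frac12\norm{z_t - \hat z_{t+1}}^2$ should be $-\frac12\norm{z_t - z_{t+1}}^2$, and $z_{t+1}$ rather than $\hat z_{t+1}$ should appear as the new anchor). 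Then I would convert $\langle F(\hat{z}_{t+1}), \hat{z}_{t+1} - z\rangle$ to $\langle F(z_{t+1}), z_{t+1} - z\rangle$ paying another $O(\varepsilon_t(L|\mathcal{D}| + \norm{F(z_{t+1})}))$ using $L$-smoothness of $F$ and boundedness of $F$ on $\mathcal{D}$ (which follows since $\norm{F(z)} \leq \norm{F(z_0)} + L|\mathcal{D}|$). Choosing $k$ so that $\varepsilon_t = |\mathcal{D}|/2^k \leq 1/(T^2(L|\mathcal{D}| + \norm{F(x_0)} + 1))$ makes the total error contribution $\sum_t (\text{error}) \leq O(1/T)$, giving $\frac{1}{\gamma}\sum_{t=0}^{T-1}\langle F(z_{t+1}), z_{t+1} - z\rangle \leq |\mathcal{D}|^2/2 + O(1/T)$ after dividing by $\gamma$; with $\gamma = 1/2L$ and dividing by $T$, convexity of $\langle F(\cdot), \cdot - z\rangle$-type averaging (Jensen on the linear-in-$z_{t+1}$ bound via Corollary~\ref{c:1}) then delivers $f(\hat{x}, y) - f(x, \hat{y}) \leq \frac{L|\mathcal{D}|^2}{T} + \frac{O(1/T)}{\gamma T}$, which after collecting constants is bounded by $\frac{L|\mathcal{D}|^2 + 1}{T}$. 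Setting $x = \hat{x}$ and $y = \hat{y}$ separately gives the two-sided bound in the statement.

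The main obstacle I anticipate is bookkeeping the error propagation cleanly: because $z_{t+1} = w_k \neq \hat{z}_{t+1}$, the telescoping sum does not close exactly, and one must verify that replacing $\hat{z}_{t+1}$ by $z_{t+1}$ everywhere — both in the anchor of the next step and inside the inner product $\langle F(\cdot), \cdot\rangle$ — introduces only errors that are linear in $\varepsilon_t$ with coefficients controlled by $|\mathcal{D}|$, $L$, and $\sup_{z\in\mathcal{D}}\norm{F(z)}$. The key quantitative facts needed are: (i) $F$ is $L$-Lipschitz and hence bounded on the bounded set $\mathcal{D}$ by $\norm{F(z_0)} + L|\mathcal{D}|$ (this is exactly why $\norm{F(x_0)}$ and $L$ appear inside the logarithm defining $k$); and (ii) $\varepsilon_t \leq |\mathcal{D}|/2^k$, so choosing $k$ logarithmic in all these quantities and in $T$ forces each per-step error below $1/T^2$, and the sum of $T$ of them below $1/T$. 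Once this is set up the inequality $f(x,\hat y) - f(\hat x,\hat y) \leq (L|\mathcal{D}|^2+1)/T$ (and symmetrically) follows by exactly the same averaging argument used for $\mathrm{PP}$ in Theorem~\ref{l:time_average_PP}, so the only genuinely new content relative to the folklore proof is this uniform error control, which Lemma~\ref{l:contraction} makes routine.
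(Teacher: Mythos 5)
Your proposal follows the paper's own proof essentially step for step: you introduce the exact proximal iterate (your $\hat{z}_{t+1}$, the paper's $p_{t+1}$), apply the projection variational inequality and the three-point identity, correct the telescoping by replacing $\hat z_{t+1}$ with $z_{t+1}$ at the cost of terms linear in $\varepsilon_t$, transfer the inner product from $\hat z_{t+1}$ to $z_{t+1}$ using $L$-Lipschitzness of $F$ and the bound $\norm{F(z)}\le\norm{F(z_0)}+L|\mathcal D|$, and choose $k$ logarithmically so the accumulated error is $O(1/T)$; the final pass to the time-average via Corollary~\ref{c:1} is also the same. Aside from a minor notational slip in where the factor $\gamma$ lands, this is the paper's argument.
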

\noindent The proof of Theorem~\ref{t:bounded} follows directly by combining Lemma~\ref{l:bounded} with Corollary~\ref{c:1}. 
\begin{lemma}\label{l:bounded}
Let $\mathcal{D}$ be a bounded convex set and $z_0,z_1,\ldots, z_T \in \mathcal{D}$ be the sequence produced by \textit{Clairvoyant Extra-Gradient} (Algorithm~\ref{alg:Extra_Gradient})  with $k := \mathcal{O}\left(\log(\max(L,1)\cdot\max(\norm{F(x_0)},1)\cdot |\mathcal{D}| \cdot T) \right)$ and $\gamma = 1/2L$. Then, 
\[\sum_{t=0}^{T-1} \langle F(z_{t+1}) , z_{t+1} - z \rangle   \leq L  \norm{z_0 - z}^2 + 1\]
for all $z \in \mathcal{D}$.
\end{lemma}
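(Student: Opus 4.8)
The plan is to mimic the folklore proof of Theorem~\ref{l:time_average_PP} for the exact Proximal Point method, but carry an explicit error term coming from the fact that Clairvoyant Extra-Gradient only computes $z_{t+1}$ approximately. Write $z_{t+1}^{\mathrm{PP}} := \mathrm{PP}^\gamma(z_t) = [z_t - \gamma F(z_{t+1}^{\mathrm{PP}})]_{\mathcal{D}}$ for the true proximal point of $z_t$; by Lemma~\ref{l:contraction} we have $\norm{z_{t+1} - z_{t+1}^{\mathrm{PP}}} \le |\mathcal{D}|/2^k$. First I would recall the key identity for projected steps: since $z_{t+1}^{\mathrm{PP}}$ is the projection of $z_t - \gamma F(z_{t+1}^{\mathrm{PP}})$ onto $\mathcal{D}$, for any $z \in \mathcal{D}$ we get $\langle z_t - \gamma F(z_{t+1}^{\mathrm{PP}}) - z_{t+1}^{\mathrm{PP}}, z - z_{t+1}^{\mathrm{PP}} \rangle \le 0$, which rearranges (via the three-point / cosine identity $2\langle a-b, c-b\rangle = \norm{a-b}^2 + \norm{c-b}^2 - \norm{a-c}^2$) into
\[
\gamma \langle F(z_{t+1}^{\mathrm{PP}}), z_{t+1}^{\mathrm{PP}} - z \rangle \le \tfrac{1}{2}\norm{z_t - z}^2 - \tfrac{1}{2}\norm{z_{t+1}^{\mathrm{PP}} - z}^2 - \tfrac{1}{2}\norm{z_t - z_{t+1}^{\mathrm{PP}}}^2.
\]
Summing over $t = 0,\dots,T-1$ telescopes the first two terms, giving $\gamma \sum_{t} \langle F(z_{t+1}^{\mathrm{PP}}), z_{t+1}^{\mathrm{PP}} - z\rangle \le \tfrac12 \norm{z_0 - z}^2$.

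Next I would pass from the ideal iterates $z_{t+1}^{\mathrm{PP}}$ to the actual iterates $z_{t+1}$. Here there are two sources of discrepancy: first, $z_t$ in the telescoping is itself the previous \emph{actual} iterate, not an ideal one, which is fine because the recursion is $z_{t+1}^{\mathrm{PP}} = \mathrm{PP}^\gamma(z_t)$ with the same $z_t$ on both sides — so the telescoping above is already written in terms of the actual sequence and is exact. Second, and this is the real work, I must replace $\langle F(z_{t+1}^{\mathrm{PP}}), z_{t+1}^{\mathrm{PP}} - z\rangle$ by $\langle F(z_{t+1}), z_{t+1} - z\rangle$. Writing $\delta_t := z_{t+1} - z_{t+1}^{\mathrm{PP}}$ with $\norm{\delta_t} \le |\mathcal{D}|/2^k$, expand
\[
\langle F(z_{t+1}), z_{t+1} - z\rangle - \langle F(z_{t+1}^{\mathrm{PP}}), z_{t+1}^{\mathrm{PP}} - z\rangle = \langle F(z_{t+1}) - F(z_{t+1}^{\mathrm{PP}}), z_{t+1} - z\rangle + \langle F(z_{t+1}^{\mathrm{PP}}), \delta_t\rangle,
\]
and bound each piece by Cauchy--Schwarz: the first by $\norm{F(z_{t+1}) - F(z_{t+1}^{\mathrm{PP}})}\cdot |\mathcal{D}| \le L\norm{\delta_t}|\mathcal{D}| \le L|\mathcal{D}|^2/2^k$ using smoothness of $F$ (an immediate consequence of Assumption~2), and the second by $\norm{F(z_{t+1}^{\mathrm{PP}})}\cdot\norm{\delta_t}$. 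The term $\norm{F(z_{t+1}^{\mathrm{PP}})}$ needs a crude a priori bound; since $\norm{F(z) - F(z')} \le L|\mathcal{D}|$ on a bounded set, $\norm{F(z_{t+1}^{\mathrm{PP}})} \le \norm{F(z_0)} + L|\mathcal{D}|$, so this term is at most $(\norm{F(z_0)} + L|\mathcal{D}|)|\mathcal{D}|/2^k$. Hence the total per-step error is $\le C |\mathcal{D}|/2^k$ with $C = O(L|\mathcal{D}| + \norm{F(z_0)})$ (matching the $\max(L,1)\cdot\max(\norm{F(x_0)},1)\cdot|\mathcal{D}|$ scaling inside the log), and over $T$ steps the cumulative error is $\le CT|\mathcal{D}|/2^k$.

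Combining, $\sum_{t=0}^{T-1}\langle F(z_{t+1}), z_{t+1} - z\rangle \le \tfrac{1}{2\gamma}\norm{z_0 - z}^2 + CT|\mathcal{D}|/2^k = L\norm{z_0-z}^2 + CT|\mathcal{D}|/2^k$ since $\gamma = 1/2L$. Finally, choosing $k = \Theta(\log(\max(L,1)\max(\norm{F(x_0)},1)|\mathcal{D}|T))$ large enough (with a suitable constant in the $\Theta$) makes $CT|\mathcal{D}|/2^k \le 1$, which yields exactly the claimed bound $\sum_t \langle F(z_{t+1}), z_{t+1}-z\rangle \le L\norm{z_0 - z}^2 + 1$. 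The main obstacle — or rather the one delicate bookkeeping point — is cleanly tracking the a priori norm bound on $F(z_{t+1}^{\mathrm{PP}})$ (equivalently, making sure the contraction argument from Lemma~\ref{l:contraction} applies at iterate $z_t$, which it does since $z_t \in \mathcal{D}$ always) and choosing the constant inside the logarithm so that all the error contributions collapse to the single additive $+1$; everything else is a direct transcription of the standard proximal-point telescoping argument.
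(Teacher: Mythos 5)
Your overall strategy mirrors the paper's: telescope a descent inequality for the ideal proximal iterates $p_{t+1}:=\mathrm{PP}^\gamma(z_t)$, then bound the per-step gap between $\langle F(p_{t+1}),p_{t+1}-z\rangle$ and $\langle F(z_{t+1}),z_{t+1}-z\rangle$. The second half of your argument (the decomposition $\langle F(z_{t+1}),z_{t+1}-z\rangle - \langle F(p_{t+1}),p_{t+1}-z\rangle = \langle F(z_{t+1})-F(p_{t+1}),z_{t+1}-z\rangle + \langle F(p_{t+1}),\delta_t\rangle$ and the a priori bound $\norm{F(p_{t+1})}\le\norm{F(z_0)}+L|\mathcal D|$) matches the paper and is correct.

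However, there is a concrete gap in the first half. You assert that summing
\[
\gamma\,\langle F(p_{t+1}),p_{t+1}-z\rangle\ \le\ \tfrac12\norm{z_t-z}^2-\tfrac12\norm{p_{t+1}-z}^2-\tfrac12\norm{z_t-p_{t+1}}^2
\]
over $t$ ``telescopes the first two terms, giving $\gamma\sum_t\langle F(p_{t+1}),p_{t+1}-z\rangle\le\tfrac12\norm{z_0-z}^2$,'' and in your bookkeeping paragraph you explicitly dismiss this as a non-issue. It does not telescope: the negative term at step $t$ is $-\norm{p_{t+1}-z}^2$, while the positive term at step $t+1$ is $+\norm{z_{t+1}-z}^2$, and $p_{t+1}\ne z_{t+1}$. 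What you correctly observe --- that the recursion $p_{t+1}=\mathrm{PP}^\gamma(z_t)$ always takes the \emph{actual} $z_t$ as input --- guarantees each individual inequality is valid, but it does nothing to make the two mismatched squared norms cancel. The residual is $\sum_t\bigl(\norm{z_t-z}^2-\norm{p_t-z}^2\bigr)$, which is $O(T|\mathcal D|^2/2^k)$ and therefore of the same order as the error you already track in the second half; the paper handles it by writing $\norm{p_{t+1}-z}^2=\norm{(p_{t+1}-z_{t+1})+(z_{t+1}-z)}^2$ \emph{before} telescoping, extracting $-\norm{z_{t+1}-z}^2$ plus an explicit $|\mathcal D|^2/(\gamma 2^k)$ error per step. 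So your conclusion is salvageable by the same tools you already deploy, but the claim as written is false and needs this extra term accounted for; without it the derivation $\gamma\sum_t\langle F(p_{t+1}),p_{t+1}-z\rangle\le\tfrac12\norm{z_0-z}^2$ is unjustified.
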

\begin{proof}
\noindent To simplify notation we set $p_{t+1}:= \mathrm{PP}^\gamma(z_t)$ meaning that $p_{t+1} = \left[z_t - \gamma F(p_{t+1}) \right]_{\mathcal{D}}$ and thus
\begin{equation}\label{eq:14}
\left\langle z_t - \gamma F(p_{t+1}) - p_{t+1}, z - p_{t+1}\right \rangle \leq 0~~~~\text{for all }z \in \mathcal{D}.    
\end{equation}
\smallskip
\noindent Applying in Equation~(\ref{eq:14}) the three point identity, $(a - b)^\top(b - c) = \norm{a-c}^2/2 - \norm{a-b}^2/2 - \norm{b-c}^2/2$ with $\alpha = z$, $b = p_{t+1}$ and $c = z_t$ we get that for any $z \in \mathcal{D}$,
\begin{eqnarray*}
\langle F(p_{t+1}) , p_{t+1} - z \rangle   &\leq& \frac{\norm{z_t - z}^2}{2\gamma} - \frac{\norm{p_{t+1} - z}^2}{2\gamma} - \frac{\norm{p_{t+1} - z_t}^2}{2\gamma}\\
&=& \frac{\norm{z_t - z}^2}{2\gamma} - \frac{\norm{(p_{t+1}-z_{t+1}) + (z_{t+1} - z)}^2}{2\gamma} - \frac{\norm{p_{t+1} - z_t}^2}{2\gamma}\\
&\leq& \frac{\norm{z_t - z}^2}{2\gamma} - \frac{\norm{z_{t+1} - z}^2}{2\gamma}  + 
\frac{\norm{z_{t+1} - z} \norm{p_{t+1}-z_{t+1}}}{\gamma}
- \frac{\norm{p_{t+1} - z_{t+1}}^2}{2\gamma}\\
&\leq& \frac{\norm{z_t - z}^2}{2\gamma} - \frac{\norm{z_{t+1} - z}^2}{2\gamma}  + 
\frac{|\mathcal{D}|^2}{\gamma 2^k}
\end{eqnarray*}
where the last-inequality comes from the fact that $\norm{p_{t+1} - z_{t+1}} \leq |\mathcal{D}|/2^k$ (see Lemma~\ref{l:contraction}). Then by a telescoping summation we get that,
\[\sum_{t=0}^{T-1} \langle F(p_{t+1}) , p_{t+1} - z \rangle \leq \frac{\norm{z_0 - z}^2}{2\gamma} + \frac{|\mathcal{D}|^2 T }{\gamma 2^k}.\]
Notice that to this end we have established the claim of Theorem~\ref{t:bounded} for the $p_t$-sequence. In order to establish the claim for the $z_t$-sequence we use the fact that $\norm{z_t - p_t} \leq |\mathcal{D}|/2^k$.
\begin{eqnarray*}
\langle F(z_{t+1}) , z_{t+1} - z \rangle &\leq&
\langle F(p_{t+1}) , p_{t+1} - z \rangle + \norm{F(z_{t+1}) - F(p_{t+1})}\norm{ z_{t+1} - z}\\
&+& \norm{F(p_{t+1})}\norm{z_{t+1} - p_{t+1}}\\
&\leq& \langle F(p_{t+1}) , p_{t+1} - z \rangle + L\norm{z_{t+1} - p_{t+1}} \norm{ z_{t+1} - z}\\
&+& \left(\norm{F(z_0)} + L\cdot \norm{p_{t+1} - z_0}\right)\cdot \norm{z_{t+1} - p_{t+1}}\\
&\leq& \langle F(p_{t+1}) , p_{t+1} - z \rangle + 
\frac{3\max(L,1) \max(\norm{F(z_0)},1) |\mathcal{D}|^2}{2^k}\\
&\leq& \frac{\norm{z_0 - z}^2}{2\gamma} + \frac{2L|\mathcal{D}|^2 T }{ 2^k} + \frac{3\max(L,1) \max(\norm{F(z_0),1)} |\mathcal{D}|^2}{2^k}
\end{eqnarray*}
The proof of Theorem~\ref{t:bounded} follows by
summing from $t=0$ to $T-1$ and selecting $k:= \log(5\max(L,1)\cdot\max(\norm{F(x_0)},1)\cdot|\mathcal{D}|^2\cdot T )$.
\end{proof}
\section{Time-Average Convergence 
for Unbounded Sets}\label{s:unbounded}
In this section, we extend the time-average convergence results of Clairvoyant Extra-Gradient for convex sets $\mathcal{D}$ with unbounded diameter. We remark that for unbounded sets $\mathcal{D}$ it is possible that there is no min-max solution to Problem~\ref{eq:minmax_equivalent} and thus we have to assume that such a $z^\ast \in \mathcal{D}$ exists. 
\begin{assumption}\label{as:existence}
There exists $z^\ast \in \mathcal{D}$ such that
$\left\langle F(z^\ast), z^\ast - z \right\rangle\leq 0~~\text{ for all }z \in \mathcal{D}$.
\end{assumption}
\noindent The convergence properties of Clairvoyant Extra-Gradient when $\mathcal{D}$ admits unbounded diameter are formally stated in Theorem~\ref{c:unbounded}.
\begin{theorem}\label{c:unbounded}
Let $z_0,z_1,\ldots, z_T$ be the sequence produced by Clairvoyant Extra-Gradient (Algorithm~\ref{alg:Extra_Gradient}) for $\gamma := 1/2L$ and $k:= \mathcal{O}\left(\log(\max(L,1)\cdot\max(\norm{F(x_0)},1)T)\right)$ and consider the time-average pair of points $(\hat{x},\hat{y}):= \sum_{t=0}^{T-1}z_{t+1} / T$. Then for any $z = (x,y) \in \mathcal{D}$
\[f(x,\hat{y}) - \frac{L\max(\norm{z_0 - z^\ast },\rho)^2}{T}  \leq f(\hat{x}, \hat{y}) \leq f(x,\hat{y}) + \frac{L\max(\norm{z_0 - z^\ast },\rho)^2}{T}\]
where $\rho = \norm{z - z_0}$.
\end{theorem}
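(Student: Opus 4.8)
The plan is to mirror the two-step structure of the bounded case (Lemma~\ref{l:bounded} together with Corollary~\ref{c:1}), replacing every appearance of the diameter $|\mathcal{D}|$ — which is now infinite — by quantities that stay finite under Assumption~\ref{as:existence}. The conceptual point is that the Proximal Point iterates, and hence the Clairvoyant Extra-Gradient iterates up to a $\mathrm{poly}(T)^{-1}$ error, never leave the ball $\mathcal{B}(z^\ast, O(\norm{z_0 - z^\ast}))$. I would therefore first establish a Fej\'er-type inequality for the proximal operator: for every $z \in \mathcal{D}$, $\norm{\mathrm{PP}^\gamma(z) - z^\ast} \le \norm{z - z^\ast}$. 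This follows by writing the projection optimality condition for $p := \mathrm{PP}^\gamma(z)$, testing it against $w = z^\ast$ to get $\langle z - p, z^\ast - p\rangle \le \gamma\langle F(p), z^\ast - p\rangle$, and then using monotonicity of $F$ (Lemma~\ref{l:monotonicity}) together with the variational inequality $\langle F(z^\ast), z^\ast - p\rangle \le 0$ of Assumption~\ref{as:existence} to conclude $\langle F(p), z^\ast - p\rangle \le 0$; hence $\langle z - p, z^\ast - p\rangle \le 0$, and Cauchy--Schwarz gives the claim. Note also that the contraction argument of Lemma~\ref{l:contraction} shows $w \mapsto [z - \gamma F(w)]_{\mathcal{D}}$ is a $\tfrac12$-contraction for $\gamma = 1/2L$ \emph{without} any boundedness assumption, so $\mathrm{PP}^\gamma(z)$ is still well defined and unique by Banach's theorem.

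Second, I would upgrade the quantitative bound of Lemma~\ref{l:contraction} to one free of $|\mathcal{D}|$. Since $w_0 = z_t \in \mathcal{D}$, the first iterate moves by $\norm{w_1 - w_0} \le \gamma\norm{F(z_t)}$; iterating the $\tfrac12$-contraction and summing the geometric tail yields $\norm{w_k - \mathrm{PP}^\gamma(z_t)} \le 2\gamma\norm{F(z_t)}/2^k$. Combining this with the $L$-Lipschitzness of $F$ used in Lemma~\ref{l:contraction}, i.e.\ $\norm{F(z_t)} \le \norm{F(z_0)} + L\norm{z_t - z_0}$, bounds the per-step contraction error $\norm{z_{t+1} - p_{t+1}}$, where $p_{t+1} := \mathrm{PP}^\gamma(z_t)$, in terms of $\norm{F(z_0)}$, $L$, $\norm{z_t - z^\ast}$ and $2^{-k}$.

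The \textbf{main obstacle} is the \emph{a priori boundedness} step, which is mildly circular: the per-step error bound uses $\norm{z_t - z^\ast}$, yet that quantity is itself driven by the accumulated errors. I would resolve this by combining the previous two facts into the recursion $\norm{z_{t+1} - z^\ast} \le (1 + 2^{-k})\norm{z_t - z^\ast} + 2^{-k}\cdot O\!\big(\norm{F(z_0)}/L + \norm{z_0 - z^\ast}\big)$, and then choosing $k = \Theta\!\big(\log(\max(L,1)\max(\norm{F(z_0)},1)T)\big)$ so that $2^{-k}T = O(1)$. A Gr\"onwall / geometric-series estimate over the finite horizon $t \le T$ then gives $\norm{z_t - z^\ast} = O(\norm{z_0 - z^\ast})$ for all such $t$, and feeding this back makes every contraction error at most $\mathrm{poly}(T)^{-1}\cdot\norm{z_0 - z^\ast}$; the $\log T$ overhead in $k$ is exactly the price paid to break the circularity.

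Finally, with the iterates confined to $\mathcal{B}(z^\ast, O(\norm{z_0 - z^\ast}))$, I would rerun the telescoping computation of Lemma~\ref{l:bounded} verbatim: apply the three-point identity to the optimality condition of $p_{t+1}$, split $p_{t+1} = z_{t+1} + (p_{t+1} - z_{t+1})$, telescope, and bound the residual $\tfrac1\gamma\sum_{t} \norm{z_{t+1} - z}\,\norm{p_{t+1} - z_{t+1}}$. Here $\norm{z_{t+1} - z} \le \norm{z_{t+1} - z^\ast} + \norm{z^\ast - z_0} + \norm{z_0 - z} = O(\norm{z_0 - z^\ast} + \rho)$ while $\norm{p_{t+1} - z_{t+1}}$ is $\mathrm{poly}(T)^{-1}$ small, so the residual sum is negligible compared with the leading term $\norm{z_0 - z}^2/(2\gamma) = L\rho^2 \le L\max(\norm{z_0 - z^\ast},\rho)^2$. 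This yields $\sum_{t=0}^{T-1}\langle F(z_{t+1}), z_{t+1} - z\rangle \le L\max(\norm{z_0 - z^\ast},\rho)^2$ for each fixed $z \in \mathcal{D}$, and substituting $z = (x,\hat y)$ and $z = (\hat x, y)$ into Corollary~\ref{c:1} and dividing by $T$, exactly as in the derivation of Theorem~\ref{t:bounded}, gives the stated saddle-point guarantee.
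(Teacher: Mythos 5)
Your proposal is correct, and it arrives at Theorem~\ref{c:unbounded} by a genuinely different route from the paper. The paper never invokes Fej\'er monotonicity of the proximal operator: it instead keeps the negative term $-\norm{p_{t+1}-z_t}^2/(4\gamma)$ alive in the telescoped inequality (its Equation~\ref{eq:10}), evaluates that inequality at $z=z^\ast$ to extract the a priori bound $\sum_t \norm{p_{t+1}-z_t}^2 \le 8\norm{z_0-z^\ast}^2$ (Equation~\ref{eq:11}), and from that gets $\norm{z_{t+1}-p_{t+1}}\le 2\sqrt2\,\norm{z_0-z^\ast}/2^k$ and the crude iterate bound $\norm{z_{t+1}-z^\ast}\le \sqrt{66}\,T\,\norm{z_0-z^\ast}$ (Corollary~\ref{c:2}). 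You instead prove the cleaner Fej\'er inequality $\norm{\mathrm{PP}^\gamma(z)-z^\ast}\le\norm{z-z^\ast}$ from projection optimality together with monotonicity and Assumption~\ref{as:existence}, sharpen Lemma~\ref{l:contraction} to the $|\mathcal{D}|$-free estimate $\norm{w_k-\mathrm{PP}^\gamma(z_t)}\le 2\gamma\norm{F(z_t)}/2^k$, and then close the circularity by a Gr\"onwall/geometric-series argument on $\norm{z_t-z^\ast}$. Your approach is more modular and gives a $T$-free stability bound $\norm{z_t-z^\ast}=O(\norm{z_0-z^\ast}+\norm{F(z_0)}/L)$, at the price of the extra $\norm{F(z_0)}/L$ term, whereas the paper's is fully self-contained within one telescoping chain but pays a factor of $T$ in the iterate bound; both of these are absorbed by the $\Theta(\log(\max(L,1)\max(\norm{F(z_0)},1)T))$ choice of $k$, so the final guarantees coincide. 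Two small imprecisions worth noting: the iterates are confined to $\mathcal{B}(z^\ast,\,O(\norm{z_0-z^\ast}+\norm{F(z_0)}/L))$ rather than $\mathcal{B}(z^\ast,\,O(\norm{z_0-z^\ast}))$ as stated, though this does not affect the conclusion after choosing $k$; and the inequality $\langle z-p,\,z^\ast-p\rangle\le 0 \Rightarrow \norm{p-z^\ast}\le\norm{z-z^\ast}$ follows from expanding $\norm{p-z^\ast}^2=\langle p-z,\,p-z^\ast\rangle+\langle z-z^\ast,\,p-z^\ast\rangle$ and then applying Cauchy--Schwarz to the second term, so ``Cauchy--Schwarz gives the claim'' is fine once this expansion is made explicit.
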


\begin{remark}
We remark by considering the bounded set $\mathcal{D}' := \mathcal{D}\cap \mathcal{B}(z_0,\rho)$ (for some $\rho > 0$) and using the algorithm of the previous section does not include Theorem~\ref{c:unbounded} since then the guarantee would only hold for $z \in \mathcal{B}(z_0,\rho)$.
\end{remark}

\noindent We dedicate the rest of this section to the proof of Theorem~\ref{c:unbounded}. As in Section~\ref{s:bounded}, we set $p_{t+1} := \mathrm{PP}^{\gamma}(z_t)$ so as to simplify notation. Moreover using the exact same arguments as in Lemma~\ref{l:contraction} (apart from the last bounding step) one can establish that the $p_{t+1}$ (\textit{Proximal Point}) and $z_{t+1}$ (\textit{Clairvoyant Extra-Gradient}) iterates are close to each other. More precisely,
\begin{equation}\label{eq:32}
  \norm{p_{t+1} - z_{t+1}} \leq \frac{\norm{p_{t+1} - z_t}}{2^k}
\end{equation}
\noindent Using Equation~\ref{eq:32} we establish Lemma~\ref{t:unbounded} which is the main technical contribution of the section. Then, the proof of Theorem~\ref{t:unbounded} follows directly by combining Lemma~\ref{t:unbounded} with Corollary~\ref{c:1}.
\begin{lemma}\label{t:unbounded}
Let $z_0,z_1,\ldots, z_T$ be the sequence of points produced by Clairvoyant Extra-Gradient (Algorithm~\ref{alg:Extra_Gradient})  for $k := \Theta\left(\log\left(\max(L,1) \cdot T \cdot \max(\norm{F(z_0)},1)\right)\right )$ and $\gamma = 1/2L$. Then for all $z \in \mathcal{D}$, 
\[\sum_{t=0}^{T-1} \langle F(z_{t+1}) , z_{t+1} - z \rangle   \leq \mathcal{O}\left(L \max(\norm{z_0 - z}, \norm{z_0 - z^\ast})^2\right)\]
\end{lemma}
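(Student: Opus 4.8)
The goal is to bound $\sum_{t=0}^{T-1}\langle F(z_{t+1}),z_{t+1}-z\rangle$ by $\mathcal{O}\big(L\max(\|z_0-z\|,\|z_0-z^\ast\|)^2\big)$ for \emph{every} fixed $z\in\mathcal{D}$, where $z^\ast$ is the saddle point from Assumption~\ref{as:existence}. The plan is to mimic the bounded-set argument of Lemma~\ref{l:bounded}, but everywhere the diameter $|\mathcal{D}|$ appeared we must replace it with a controllable quantity. The key change is Equation~\eqref{eq:32}: instead of $\|p_{t+1}-z_{t+1}\|\le|\mathcal{D}|/2^k$ we have $\|p_{t+1}-z_{t+1}\|\le\|p_{t+1}-z_t\|/2^k$, so we first need a uniform bound on $\|p_{t+1}-z_t\|$ (equivalently on $\|z_t-z^\ast\|$ and $\|z_t-z\|$) along the whole trajectory.

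\textbf{Step 1: a priori boundedness of the iterates.} First I would show, by induction on $t$, that $\|z_t-z^\ast\|\le 2\|z_0-z^\ast\|$ (or some similar absolute constant times $\|z_0-z^\ast\|$, allowing a small slack from the approximation error). The clean core fact is that the exact Proximal Point map is nonexpansive toward $z^\ast$: from $p_{t+1}=[z_t-\gamma F(p_{t+1})]_\mathcal{D}$ and the variational characterization of projection, together with monotonicity of $F$ (Lemma~\ref{l:monotonicity}) and $\langle F(z^\ast),z^\ast-p_{t+1}\rangle\le 0$ (Assumption~\ref{as:existence}), one gets $\|p_{t+1}-z^\ast\|\le\|z_t-z^\ast\|$. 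Then $\|z_{t+1}-z^\ast\|\le\|p_{t+1}-z^\ast\|+\|z_{t+1}-p_{t+1}\|\le\|z_t-z^\ast\|+\|p_{t+1}-z_t\|/2^k$, and since $\|p_{t+1}-z_t\|\le\|p_{t+1}-z^\ast\|+\|z_t-z^\ast\|\le 2\|z_t-z^\ast\|$, choosing $k$ large enough (say $2^k\ge 2T$, which is absorbed by the stated $k=\Theta(\log(\cdots T))$) makes the per-step error at most $\|z_t-z^\ast\|\cdot(1/T)$, and the geometric-type accumulation over $T$ steps keeps $\|z_t-z^\ast\|\le 2\|z_0-z^\ast\|$ for all $t\le T$. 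Consequently $\|p_{t+1}-z_t\|\le 4\|z_0-z^\ast\|=:R$ and $\|z_{t+1}-z\|\le\|z_{t+1}-z^\ast\|+\|z^\ast-z_0\|+\|z_0-z\|\le \mathcal{O}(\max(\|z_0-z\|,\|z_0-z^\ast\|))=:M$.

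\textbf{Step 2: the Proximal Point telescoping bound with error terms.} With $R$ in hand, I repeat the three-point-identity computation of Lemma~\ref{l:bounded} verbatim: applying $(a-b)^\top(b-c)=\tfrac12\|a-c\|^2-\tfrac12\|a-b\|^2-\tfrac12\|b-c\|^2$ to Equation~\eqref{eq:14} with $a=z$, $b=p_{t+1}$, $c=z_t$ gives
\[
\langle F(p_{t+1}),p_{t+1}-z\rangle\le\frac{\|z_t-z\|^2}{2\gamma}-\frac{\|z_{t+1}-z\|^2}{2\gamma}+\frac{M\cdot\|p_{t+1}-z_{t+1}\|}{\gamma},
\]
where I expanded $\|p_{t+1}-z\|^2=\|(p_{t+1}-z_{t+1})+(z_{t+1}-z)\|^2$ and dropped the negative square, bounding the cross term using $\|z_{t+1}-z\|\le M$. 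Telescoping over $t=0,\dots,T-1$ and using $\|p_{t+1}-z_{t+1}\|\le R/2^k$ yields $\sum_t\langle F(p_{t+1}),p_{t+1}-z\rangle\le\frac{\|z_0-z\|^2}{2\gamma}+\frac{MRT}{\gamma 2^k}$.

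\textbf{Step 3: transfer from the $p$-sequence to the $z$-sequence.} Exactly as in Lemma~\ref{l:bounded}, write $\langle F(z_{t+1}),z_{t+1}-z\rangle\le\langle F(p_{t+1}),p_{t+1}-z\rangle+\|F(z_{t+1})-F(p_{t+1})\|\,\|z_{t+1}-z\|+\|F(p_{t+1})\|\,\|z_{t+1}-p_{t+1}\|$; bound the middle term by $L\|z_{t+1}-p_{t+1}\|\cdot M$ and $\|F(p_{t+1})\|\le\|F(z_0)\|+L\|p_{t+1}-z_0\|\le\|F(z_0)\|+L\cdot\mathcal{O}(\|z_0-z^\ast\|)$; each error is $\mathcal{O}\big((\|F(z_0)\|+LM)\cdot M\big)/2^k$. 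Summing over $t$ and choosing $k=\Theta(\log(\max(L,1)\cdot\max(\|F(z_0)\|,1)\cdot T))$ — note this is $\emph{independent of the unbounded diameter}$ because all the $M,R$ factors inside the error are polynomial in $\|z_0-z\|,\|z_0-z^\ast\|$ and the $T$ factor kills them — makes the total error $\le 1$, say. What remains is $\frac{\|z_0-z\|^2}{2\gamma}+1=L\|z_0-z\|^2+1=\mathcal{O}(L\max(\|z_0-z\|,\|z_0-z^\ast\|)^2)$, as claimed.

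\textbf{Main obstacle.} The only genuinely new ingredient compared to the bounded case is Step~1: establishing that the iterates stay in a ball of radius $\mathcal{O}(\|z_0-z^\ast\|)$ around $z^\ast$, so that the replacement of $|\mathcal{D}|$ by an $\ell_2$-distance is legitimate and the required $k$ does not secretly depend on the (infinite) diameter. The subtlety is that the exact nonexpansiveness $\|p_{t+1}-z^\ast\|\le\|z_t-z^\ast\|$ holds only for the Proximal Point iterate $p_{t+1}$, not for the computed $z_{t+1}$; one must carefully propagate the small additive approximation error through the induction without it blowing up, which works precisely because $2^{-k}$ beats the $T$ steps. Everything downstream (Steps 2–3) is a line-by-line reprise of Lemma~\ref{l:bounded} with $|\mathcal{D}|$ replaced by $M=\mathcal{O}(\max(\|z_0-z\|,\|z_0-z^\ast\|))$.
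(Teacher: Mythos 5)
Your proof is essentially correct, but it takes a genuinely different route from the paper's own argument in one crucial respect. The paper does \emph{not} establish an a priori bound on the iterates: instead it carries the error term $\norm{p_{t+1}-z_{t+1}}\norm{z_{t+1}-z}$ along explicitly, splits it via Young's inequality with a large parameter $T^4$, and then absorbs the resulting $T^4\norm{p_{t+1}-z_{t+1}}^2$ into the negative $-\norm{p_{t+1}-z_t}^2/(2\gamma)$ term using the contraction bound (Equation~\ref{eq:32}) and $T^4/2^k\le 1/2$. After telescoping and an inductive estimate relating $\sum\norm{z_{t+1}-z}^2$ to $\sum\norm{p_{t+1}-z_t}^2$, the paper arrives at Equation~\ref{eq:10}, which it then specializes to $z=z^\ast$ to obtain the \emph{a posteriori} control $\sum_t\norm{p_{t+1}-z_t}^2\le 8\norm{z_0-z^\ast}^2$; this is what ultimately bounds the per-step error and (via Corollary~\ref{c:2}) the iterates, giving the fairly loose $\norm{z_{t+1}-z^\ast}\le \sqrt{66}\,T\norm{z_0-z^\ast}$. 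In contrast, your Step~1 gets iterate boundedness \emph{up front} by exploiting the nonexpansiveness of the exact proximal map toward $z^\ast$ (using the variational characterization of the projection together with monotonicity of $F$ and Assumption~\ref{as:existence}), and then propagating the $2^{-k}$-small per-step perturbation through a multiplicative induction. This yields the cleaner, $T$-independent bound $\norm{z_t-z^\ast}\le e\norm{z_0-z^\ast}$ and makes the remainder of the argument a verbatim reprise of Lemma~\ref{l:bounded} with $|\mathcal{D}|$ replaced by $M=\mathcal{O}(\max(\norm{z_0-z},\norm{z_0-z^\ast}))$. Your approach is conceptually tighter and avoids the $T^4$ bookkeeping, at the modest cost of needing the nonexpansiveness observation; the paper's approach keeps everything inside a single telescoping computation.

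One small imprecision: you claim that choosing $k=\Theta(\log(\max(L,1)\max(\norm{F(z_0)},1)T))$ makes the total error ``$\le 1$.'' That cannot be literally true since the error still carries the factors $M$ and $R$, which are not killed by the choice of $k$ (and indeed cannot be, since $k$ must not depend on $z$). What the choice of $k$ actually buys you is that the $T$, $L$ and $\norm{F(z_0)}$ factors cancel, leaving a total error of order $\mathcal{O}(\max(\norm{z_0-z},\norm{z_0-z^\ast})^2)$. This is still dominated by the claimed $\mathcal{O}(L\max(\norm{z_0-z},\norm{z_0-z^\ast})^2)$ bound (interpreting $L$ as $\max(L,1)$, consistently with the rest of the paper), so the conclusion stands; the wording of the last step should just be corrected.
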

\begin{proof}
As in the proof of Theorem~\ref{t:bounded} we first prove the claim of Lemma~\ref{t:unbounded} for the $p_t$-sequence. We then extend the claim for the $z_t$-sequence by using the fact that $\norm{z_{t+1} - p_{t+1}} \leq \norm{z_{t} - p_{t+1}}/2^k$.\\
Recall that $p_{t+1} = \left[z_t - \gamma F(p_{+1}) \right]_{\mathcal{D}}$ and thus
\begin{equation}\label{eq:45}
\langle F(p_{t+1}) , p_{t+1} - z \rangle   \leq \frac{\norm{z_t - z}^2}{2\gamma} - \frac{\norm{p_{t+1} - z}^2}{2\gamma} - \frac{\norm{p_{t+1} - z_t}^2}{2\gamma}~~\text{for all }z\in \mathcal{D}.
\end{equation}
\noindent Thus,
\begin{eqnarray*}
\langle F(p_{t+1}) , p_{t+1} - z \rangle   &\leq& \frac{\norm{z_t - z}^2}{2\gamma} - \frac{\norm{p_{t+1} - z}^2}{2\gamma} - \frac{\norm{p_{t+1} - z_t}^2}{2\gamma}\\
&=& \frac{\norm{z_t - z}^2}{2\gamma} - \frac{\norm{(p_{t+1}-z_{t+1}) + (z_{t+1} - z)}^2}{2\gamma} - \frac{\norm{p_{t+1} - z_t}^2}{2\gamma}\\
&\leq& \frac{\norm{z_t - z}^2}{2\gamma} - \frac{\norm{z_{t+1} - z}^2}{2\gamma}  + 
\frac{\norm{z_{t+1} - z} \norm{p_{t+1}-z_{t+1}}}{\gamma}
- \frac{\norm{p_{t+1} - z_t}^2}{2\gamma}\\
&\leq& \frac{\norm{z_t - z}^2 - \norm{z_{t+1} - z}^2}{2\gamma} 
+\frac{\norm{z_{t+1} - z}^2}{2\gamma T^4} + \frac{T^4 \norm{p_{t+1}-z_{t+1}}^2}{2\gamma}
- \frac{\norm{p_{t+1} - z_t}^2}{2\gamma}\\
&\leq& \frac{\norm{z_t - z}^2 - \norm{z_{t+1} - z}^2}{2\gamma} 
+\frac{\norm{z_{t+1} - z}^2}{2\gamma \cdot T^4} + \frac{ T^4 \norm{p_{t+1}-z_{t}}^2}{2^{k+1}\gamma}
- \frac{\norm{p_{t+1} - z_t}^2}{2\gamma}\\
&\leq&
\frac{\norm{z_t - z}^2 - \norm{z_{t+1} - z}^2}{2\gamma}+\frac{\norm{z_{t+1} - z}^2}{2\gamma  T^4}
- \frac{\norm{p_{t+1} - z_t}^2}{4\gamma}
\end{eqnarray*}
where in the fourth inequality we used the fact that $\norm{p_{t+1} - z_{t+1}} \leq \norm{p_{t+1} - z_{t}}/2^k$ while the last inequality comes from the fact that $T^4/2^k \leq 1/2$ since $k = \Omega(\log T)$. Finally, by a telescopic summation over all $t$ we get that,
\begin{eqnarray*}
\sum_{t=0}^{T-1} \langle F(p_{t+1}) , p_{t+1} - z \rangle   &\leq& \frac{\norm{z_0 - z}^2 }{2\gamma} 
+\sum_{t=0}^{T-1}\frac{\norm{z_{t+1} - z}^2}{2\gamma \cdot T^4}
- \sum_{t=0}^{T-1} \frac{\norm{p_{t+1} - z_t}^2}{4\gamma}\\
&\leq& \frac{\norm{z_0 - z}^2 }{2\gamma}
+\frac{2T^2\cdot \sum_{t=0}^{T-1}\norm{z_{t+1}-z_t}^2 + 2T^2\norm{z_0 - z}^2}{2\gamma \cdot T^4}
- \sum_{t=0}^{T-1} \frac{\norm{p_{t+1} - z_t}^2}{4\gamma}\\
&\leq& \frac{\norm{z_0 - z}^2 }{2\gamma}
+\frac{2T^2 \left(2\sum_{t=0}^{T-1}\norm{p_{t+1}-z_t}\right)^2 + 2T^2\cdot \norm{z_0 - z}^2}{2\gamma \cdot T^4}
- \sum_{t=0}^{T-1} \frac{\norm{p_{t+1} - z_t}^2}{4\gamma}\\
&\leq& \frac{\norm{z_0 - z}^2 }{2\gamma}
+\frac{8T^3 \sum_{t=0}^{T-1}\norm{p_{t+1}-z_t}^2 + 2T^2\cdot \norm{z_0 - z}^2}{2\gamma \cdot T^4}
- \sum_{t=0}^{T-1} \frac{\norm{p_{t+1} - z_t}^2}{4\gamma}
\end{eqnarray*}
where the third inequality follows by the fact that $\sum_{t=0}^{T-1}\norm{z_{t+1} - z_t} \leq \sum_{t=0}^{T-1}\norm{z_{t+1} - p_{t+1}} + \sum_{t=0}^{T-1}\norm{p_{t+1} - z_{t}}\leq  2 \sum_{t=0}^{T-1}\norm{p_{t+1} - z_t}$ (since $\norm{p_{t+1} - z_{t+1}}\leq \norm{p_{t+1} - z_{t}}/2^k$). As a result, for $T \geq 32$ we get that
\begin{equation}\label{eq:10}
\sum_{t=0}^{T-1} \langle F(p_{t+1}) , p_{t+1} - z \rangle
\leq \frac{\norm{z_0 - z}^2 }{\gamma}
- \sum_{t=0}^{T-1} \frac{\norm{p_{t+1} - z_t}^2}{8\gamma}.
\end{equation}
\noindent To this end we have established the bound for the $p_t$-sequence, but in order to complete the proof of Theorem~\ref{t:unbounded} we need a precise bound on the error term $\norm{z_t - p_t}$. To do so, we apply Equation~\ref{eq:10} for $z = z^\ast$ i.e. $\langle F(z^\ast) , z^\ast - z\rangle \leq 0$ for any $z \in \mathcal{D}$.
\begin{equation}\label{eq:11}
\sum_{t=0}^{T-1}\norm{p_{t+1} - z_t}^2 \leq 8 \norm{z_0 - z}^2 -8\gamma \underbrace{\sum_{t=0}^{T-1} \langle F(p_{t+1}) , p_{t+1} - z^\ast \rangle}_{\geq 0}
\end{equation}
where $\langle F(p_{t+1}) , p_{t+1} - z^\ast \rangle\geq 0$ comes from the fact that $\langle F(p_{t+1}) -F(z^\ast) , p_{t+1} - z^\ast \rangle\geq 0$ (Corollary~\ref{c:1}) and $\langle F(z^\ast) , z^\ast - p_{t+1}\rangle \leq 0$. As a result, we get the following explicit upper bound on the error term $\norm{z_{t
+1} - p_{t+1}}$, 
\begin{equation}\label{eq:error}
\norm{z_{t+1} - p_{t+1}} \leq \frac{\norm{z_t - p_{t+1}}}{2^k}\leq 2\sqrt{2}\cdot \frac{\norm{z_0 - z^\ast}}{2^k}.
\end{equation}
\noindent Using Equation~\ref{eq:32} and~\ref{eq:error} one can additionally establish that $\norm{z_{t+1} - z^\ast},\norm{p_{t+1} - z_0} \leq\mathcal{O}\left( T \norm{z_0 - z^\ast}\right)$ (see Corollary~\ref{c:2}). We conclude with the proof of Lemma~\ref{t:unbounded}: 
\begin{eqnarray*}
\sum_{t=0}^{T-1}\langle F(z_{t+1}) , z_{t+1} - z^\ast \rangle &\leq&
\sum_{t=0}^{T-1} \langle F(p_{t+1}) , p_{t+1} - z^\ast \rangle + \sum_{t=0}^{T-1}\norm{F(z_{t+1}) - F(p_{t+1})}\cdot \norm{ z_{t+1} - z^\ast}\\
&+& \sum_{t=0}^{T-1} \norm{F(p_{t+1})}\norm{z_{t+1} - p_{t+1}}\\
&\leq& \sum_{t=0}^{T-1} \langle F(p_{t+1}) , p_{t+1} - z^\ast \rangle + \mathcal{O}\left( \frac{LT^2 \norm{z_0 - z^\ast}^2}{2^k} \right)\\
&+& \mathcal{O}\left(\frac{T \norm{F(z_0)} \norm{z_0 - z^\ast}}{2^k}\right) + \mathcal{O}\left( \frac{LT^2 \norm{z_0 - z^\ast}^2}{2^k} \right)\\
&\leq& \frac{\norm{z_0 - z^\ast}^2}{\gamma} + \mathcal{O}\left( \frac{\max(L,1)T^2 \max(\norm{F(z_0)},1) \cdot \norm{z_0 - z^\ast}^2}{2^k} \right)\\
\end{eqnarray*}
\begin{corollary}\label{c:2}
The following inequalities hold,
\begin{enumerate}
    \item $\norm{z_{t+1} - z^\ast} \leq \sqrt{66} T \cdot \norm{z_0 - z^\ast}$
        \item $\norm{p_{t+1} - z_0} \leq \sqrt{66} T\cdot \norm{z_0 - z^\ast}$
\end{enumerate}
\end{corollary}
\begin{proof}
$\norm{z_{t+1} - z^\ast}^2 \leq 2T \sum_{t=0}^{T-1}\norm{z_{t+1}-z_t}^2 + 2T \norm{z_0 - z^\ast}^2 \leq 8T^2 \sum_{t=0}^{T-1}\norm{p_{t+1}- z_t}^2 + 2T \norm{z_0 -z^\ast}^2$ where the last inequality comes from $\sum_{t=0}^{T-1}\norm{z_{t+1} - z_t} \leq 2\sum_{t=0}^{T-1}\norm{p_{t+1} - z_t}$. As a result,
\[\norm{z_{t+1} - z^\ast}^2 \leq  8T^2 \sum_{t=0}^{T-1}\norm{p_{t+1}- z_t}^2 + 2T \norm{z_0 -z^\ast}^2\leq 66T^2\cdot \norm{z_0 -z^\ast}^2 \]
where the last inequality comes from the Equation~\ref{eq:error}. The proof of the second item follows by the exact same steps.
\end{proof}
\end{proof}

\section{Last-Iterate Convergence 
}\label{s:last-iterate}
In this section, we establish the last-iterate convergence properties of \textit{Clairvoyant Extra-Gradient} when $\mathcal{D}= \mathbb{R}^{n+m}$. As already mentioned, our results recover the recent result of \cite{GPDO20} establishing $\Theta(1/\sqrt{T})$ last-iterate convergence of Extra-Gradient when $\mathcal{D}= \mathbb{R}^{n+m}$. We remark that in this case, Problem~\ref{eq:minmax_equivalent} asks for a $z^\ast \in \mathbb{R}^{n+m}$ such that $\norm{F(z^\ast)} =0$ and thus we need to assume that such a point exists.
\begin{assumption}
There exists $z^\ast \in \mathbb{R}^{n+m}$ such that $\norm{F(z^\ast)} = 0$.
\end{assumption}
\noindent The last-iterate convergence properties of \textit{Clairvoyant Extra-Gradient} are formally stated and established in Theorem~\ref{t:last-iterate}.
\begin{theorem}\label{t:last-iterate}
Let $z_0,z_1,\ldots,z_T$ be the sequence of points produced by the Clairvoyant Extra-Gradient for $\mathcal{D} = \mathbb{R}^{n+m}$ and $k = \mathcal{O}\left(\log\left(\max(L,1) \cdot T \cdot \max(\norm{F(z_0)},1)\right)\right)$. Then
\[ \norm{F(z_T )} \leq \mathcal{O}\left(\frac{L \norm{z_0 - z^\ast}^2}{\sqrt{T}} \right).\]
\end{theorem}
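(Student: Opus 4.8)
I plan to follow the proof of Theorem~\ref{l:GD21} for the Proximal Point method, carrying through the discrepancy between the Clairvoyant Extra-Gradient iterate $z_{t+1}$ and the ideal Proximal Point iterate $p_{t+1}:=\mathrm{PP}^\gamma(z_t)$. Since $\mathcal{D}=\mathbb{R}^{n+m}$ there is no projection, so $z_t-p_{t+1}=\gamma F(p_{t+1})$, and the contraction estimate of Lemma~\ref{l:contraction} (as in Equation~\ref{eq:32}) gives $\norm{z_{t+1}-p_{t+1}}\le\norm{z_t-p_{t+1}}/2^k=\gamma\norm{F(p_{t+1})}/2^k$. The argument rests on three steps: (i) an energy-dissipation inequality toward $z^\ast$, (ii) an approximate monotonicity of the residual norms $\norm{F(p_{t+1})}$ along the trajectory, playing the role of the exact monotonicity of $\norm{F(z_t)}$ in the Proximal Point analysis, and (iii) transferring the residual bound from $p_T$ to $z_T$.

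For step (i), I would expand $\norm{z_{t+1}-z^\ast}^2$ around $p_{t+1}$, use $p_{t+1}=z_t-\gamma F(p_{t+1})$, and invoke $\langle F(p_{t+1}),p_{t+1}-z^\ast\rangle\ge 0$ (monotonicity of $F$ together with $F(z^\ast)=0$) to obtain $\norm{z_{t+1}-z^\ast}^2\le\norm{z_t-z^\ast}^2-\gamma^2\norm{F(p_{t+1})}^2+\mathcal{O}(\norm{z_{t+1}-p_{t+1}}(\norm{z_t-z^\ast}+\norm{z_{t+1}-p_{t+1}}))$, where, using $\norm{z_{t+1}-p_{t+1}}\le\gamma\norm{F(p_{t+1})}/2^k\le\gamma L\norm{z_t-z^\ast}/2^k$, the error term is $\mathcal{O}(\norm{z_t-z^\ast}^2/\mathrm{poly}(T))$ once $k$ is logarithmic in the relevant quantities. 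A short induction then shows $\norm{z_t-z^\ast}\le 2\norm{z_0-z^\ast}$ for all $t\le T$ (alternatively these a priori bounds can be imported from Section~\ref{s:unbounded}, since $\mathbb{R}^{n+m}$ is a special case and Assumption~\ref{as:existence} holds), and telescoping yields $\gamma^2\sum_{t=0}^{T-1}\norm{F(p_{t+1})}^2\le\mathcal{O}(\norm{z_0-z^\ast}^2)$, hence $\min_{0\le t\le T-1}\norm{F(p_{t+1})}\le\mathcal{O}(\norm{z_0-z^\ast}/(\gamma\sqrt{T}))$.

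For step (ii), write $\delta_{t+1}:=z_{t+1}-p_{t+1}$; one checks $p_{t+1}-p_{t+2}=\gamma F(p_{t+2})-\delta_{t+1}$, and feeding this into monotonicity of $F$ (Lemma~\ref{l:monotonicity}) gives $\gamma\norm{F(p_{t+2})}^2\le\gamma\norm{F(p_{t+1})}\norm{F(p_{t+2})}+(\norm{F(p_{t+1})}+\norm{F(p_{t+2})})\norm{\delta_{t+1}}$. Substituting $\norm{\delta_{t+1}}\le\gamma\norm{F(p_{t+1})}/2^k$ and solving the resulting quadratic inequality in $\norm{F(p_{t+2})}$ yields $\norm{F(p_{t+2})}\le(1+2^{1-k})\norm{F(p_{t+1})}$; since $k=\Omega(\log T)$, iterating over at most $T$ steps inflates the residual by only a factor $(1+2^{1-k})^T=\mathcal{O}(1)$, so $\norm{F(p_T)}\le\mathcal{O}(1)\cdot\min_{t}\norm{F(p_{t+1})}\le\mathcal{O}(\norm{z_0-z^\ast}/(\gamma\sqrt{T}))$. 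For step (iii), $L$-Lipschitzness of $F$ and Equation~\ref{eq:32} give $\norm{F(z_T)}\le\norm{F(p_T)}+L\norm{z_T-p_T}$, and the last term is $\mathcal{O}(\norm{z_0-z^\ast}/2^k)$, negligible for the prescribed $k=\mathcal{O}(\log(\max(L,1)\cdot T\cdot\max(\norm{F(z_0)},1)))$; with $\gamma=1/2L$ this produces the claimed $\mathcal{O}(L\norm{z_0-z^\ast}^2/\sqrt{T})$ rate. The main obstacle is step (ii): one must ensure the accumulated per-step approximation errors do not let $\norm{F(p_{t+1})}$ grow along the $T$ iterations. What makes it work is that $\norm{\delta_{t+1}}$ is itself proportional to $\norm{F(p_{t+1})}$, so the per-step distortion is \emph{multiplicative} of magnitude $\approx 2^{1-k}$, which compounds over $T$ steps to a harmless constant because $k$ is logarithmic in $T$; everything else (the telescoping estimate and the boundedness of the iterates) parallels the Proximal Point proof and the machinery of Section~\ref{s:unbounded}.
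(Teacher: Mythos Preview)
Your proof is correct and follows the same three-step skeleton as the paper: (i) a telescoping energy estimate giving $\sum_{t}\gamma^2\norm{F(p_{t+1})}^2 \le \mathcal{O}(\norm{z_0-z^\ast}^2)$ (the paper imports this from Section~\ref{s:unbounded} as Equation~\eqref{eq:1}, noting $\norm{p_{t+1}-z_t}=\gamma\norm{F(p_{t+1})}$); (ii) an approximate-monotonicity argument for the residuals; and (iii) transfer from $p_T$ to $z_T$ via Lipschitzness of $F$. The only substantive difference is how step~(ii) is executed. The paper first proves the \emph{exact} inequality $\norm{p_{t+1}-p_t}\le\norm{z_t-z_{t-1}}$ (the direct analogue of Lemma~\ref{l:GD212} for Proximal Point) and then chains via triangle inequalities to obtain an \emph{additive} recursion $\norm{p_{t+1}-z_t}^2\le\norm{p_t-z_{t-1}}^2+\mathcal{O}(\norm{z_0-z^\ast}^2/2^k)$, accumulating $\mathcal{O}(T/2^k)$ total error. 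You instead apply monotonicity directly to the pair $(p_{t+1},p_{t+2})$ and solve a quadratic in $\norm{F(p_{t+2})}$ to extract a \emph{multiplicative} recursion $\norm{F(p_{t+2})}\le(1+2^{1-k})\norm{F(p_{t+1})}$, which compounds over $T$ steps to a harmless $(1+2^{1-k})^T=\mathcal{O}(1)$ factor. Both routes ultimately rest on the same observation---that the per-step perturbation $\norm{\delta_{t+1}}$ is itself proportional to the current residual---and are of comparable length; the paper's additive bookkeeping avoids the quadratic algebra, while your multiplicative form makes the self-referential structure of the error more transparent.
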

\noindent We dedicate the rest of the section to the proof of Theorem~\ref{t:last-iterate}. As before we set $p_{t+1}:= \mathrm{PP}^\gamma(z_t)$ where $\gamma = 1/2L$. Since we are in the unconstrained case, the latter implies that $p_{t+1} = z_t - \gamma F(p_{t+1})$. \noindent Recall that by Equation~\ref{eq:11} derived in Section~\ref{s:unbounded} we know that,
\begin{equation}\label{eq:1}
\sum_{t=0}^{T-1} \norm{p_{t+1} - z_t}^2 \leq 8 \norm{z_0 - z^\ast}^2
\end{equation}
In order to prove Theorem~\ref{t:last-iterate}, we just need to show that the quantity $\norm{p_{t+1} - z_t}^2$ is decreasing. This statement is not entirely true, but the following lemma establishes that $\norm{p_{T} - z_{T-1}}$ is approximately the minimum over all iterates.
\begin{lemma}\label{l:31}
For any $t\geq 0$, $$\norm{p_{T} - z_{T-1}}^2 \leq  \norm{p_{t+1} - z_t}^2 + \mathcal{O}(T\norm{z_0 - z^\ast}^2/2^k).$$
\end{lemma}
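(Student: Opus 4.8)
The plan is to show that the sequence $a_t := \norm{p_{t+1} - z_t}^2$ is \emph{almost} monotone non-increasing, in the sense that $a_T \le a_{t+1} + (\text{small error})$ for every $t$, which immediately yields $a_T \le a_{t+1} + \mathcal{O}(T\norm{z_0 - z^\ast}^2/2^k)$ after summing the per-step errors. The natural quantity to compare across consecutive steps is $\norm{p_{t+1} - z_t}$ versus $\norm{p_{t+2} - z_{t+1}}$. For the exact Proximal Point iteration (where $z_{t+1} = p_{t+1}$), the key fact is that $z \mapsto \mathrm{PP}^\gamma(z)$ is nonexpansive and in the unconstrained case $\mathrm{PP}^\gamma(z) = z - \gamma F(\mathrm{PP}^\gamma(z))$; a standard computation using monotonicity of $F$ (Lemma~\ref{l:monotonicity}) shows $\norm{F(p_{t+2})} \le \norm{F(p_{t+1})}$, equivalently $\norm{p_{t+2} - z_{t+1}} \le \norm{p_{t+1} - z_t}$ when the iteration is exact. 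So first I would carry out this exact-PP monotonicity computation: write $p_{t+2} - p_{t+1} = (z_{t+1} - z_t) - \gamma(F(p_{t+2}) - F(p_{t+1}))$, take squared norms, and use $\langle F(p_{t+2}) - F(p_{t+1}), p_{t+2} - p_{t+1}\rangle \ge 0$ to drop the cross term.

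Next I would reinstate the error terms. Since Clairvoyant Extra-Gradient only produces $z_{t+1} \approx p_{t+1}$ with $\norm{z_{t+1} - p_{t+1}} \le \norm{z_t - p_{t+1}}/2^k = \sqrt{a_t}/2^k$ (Equation~\ref{eq:32}), every appearance of $z_{t+1}$ in place of $p_{t+1}$ introduces an additive perturbation controlled by $\sqrt{a_t}/2^k$, and correspondingly $\norm{F(z_{t+1}) - F(p_{t+1})} \le L\sqrt{a_t}/2^k$ by smoothness. Propagating these through the squared-norm expansion, I expect to obtain a recursion of the shape $a_{t+1} \le a_t + c_t$ where $c_t = \mathcal{O}(\sqrt{a_t}\,\norm{z_0-z^\ast}/2^k + \norm{z_0-z^\ast}^2/2^k)$ after using the crude bounds $a_t \le \mathcal{O}(\norm{z_0 - z^\ast}^2)$ that follow from Equation~\ref{eq:1} and Corollary~\ref{c:2}. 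Telescoping from $t+1$ up to $T-1$ gives $a_T \le a_{t+1} + \sum_{s} c_s \le a_{t+1} + \mathcal{O}(T\norm{z_0-z^\ast}^2/2^k)$, which is exactly the claimed bound (since $k = \Theta(\log(\max(L,1)T\max(\norm{F(z_0)},1)))$ makes the $L$ and $\norm{F(z_0)}$ factors harmless).

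The main obstacle I anticipate is bookkeeping the error terms cleanly enough that they all collapse into a single $\mathcal{O}(T\norm{z_0-z^\ast}^2/2^k)$ bound: the perturbation at step $t$ depends on $a_t$, which is itself the quantity being controlled, so there is a mild circularity that must be broken by first invoking the a priori bound $a_t \le 8\norm{z_0-z^\ast}^2$ from Equation~\ref{eq:1}. A secondary subtlety is that the exact-PP monotonicity argument above compares $p_{t+2}-z_{t+1}$ with $p_{t+1}-z_t$, but the CEG update feeds the \emph{approximate} iterate $z_{t+1}$ (not $p_{t+1}$) into the next proximal subproblem, so the clean inequality $\norm{\mathrm{PP}^\gamma(z_{t+1}) - z_{t+1}} \le \norm{\mathrm{PP}^\gamma(z_t) - z_t}$ fails to hold exactly and must be patched: one writes $\mathrm{PP}^\gamma(z_{t+1}) - z_{t+1}$ in terms of $\mathrm{PP}^\gamma(p_{t+1}) - p_{t+1}$ plus a nonexpansiveness-controlled discrepancy of size $\mathcal{O}(\norm{z_{t+1}-p_{t+1}}) = \mathcal{O}(\sqrt{a_t}/2^k)$. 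Once these two points are handled, the rest is routine Cauchy--Schwarz and triangle-inequality estimation.
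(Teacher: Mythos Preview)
Your proposal is correct and follows essentially the same route as the paper: prove $\norm{p_{t+2}-p_{t+1}}\le\norm{z_{t+1}-z_t}$ via the squared-norm expansion and monotonicity of $F$, then sandwich this between $a_{t+1}=\norm{p_{t+2}-z_{t+1}}^2$ and $a_t=\norm{p_{t+1}-z_t}^2$ using triangle-inequality corrections of size $\norm{z_{t+1}-p_{t+1}}=\mathcal{O}(\norm{z_0-z^\ast}/2^k)$ (from Equation~\ref{eq:32} together with the a priori bound $a_t\le 8\norm{z_0-z^\ast}^2$), and telescope. Your ``secondary subtlety'' is a non-issue: the inequality $\norm{p_{t+2}-p_{t+1}}\le\norm{z_{t+1}-z_t}$ already holds \emph{exactly} for the CEG iterates (your own first-step computation shows this), so there is no need to introduce $\mathrm{PP}^\gamma(p_{t+1})$ or invoke nonexpansiveness separately---the paper simply expands $\norm{p_{t+1}-z_t}^2\le\norm{p_{t+1}-p_t}^2+\text{error}\le\norm{z_t-z_{t-1}}^2+\text{error}\le\norm{p_t-z_{t-1}}^2+\text{error}$ directly.
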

\noindent Using Lemma~\ref{l:31} one can easily establish Theorem~\ref{t:last-iterate}. More precisely, combining Lemma~\ref{l:31} with Equation~\ref{eq:1} we directly get that
\[T \norm{p_{T} - z_{T-1} }^2 \leq \sum_{t=0}^{T-1} \norm{p_{t+1} - z_t}^2+ \mathcal{O}\left(T^2\frac{\norm{z_0 -z^\ast}^2}{2^k}\right) \leq \mathcal{O}\left(\norm{z_0 - z^\ast}^2\right)\]
As a result, $\norm{F(p_T)} \leq \mathcal{O}\left(\frac{\norm{z_0 - z^\ast}}{\gamma \sqrt{T}}\right)$. Also notice that $\norm{z_T - p_T} \leq \norm{z_{T-1} - p_T}/2^k \leq \mathcal{O}\left(\norm{z_0 - z^\ast}/2^k\sqrt{T} \right)$ and thus $\norm{F(z_T)} \leq \norm{F(p_T)} + L\norm{z_T - p_T} \leq \mathcal{O}\left(L\frac{\norm{z_0 - z^\ast}}{ \sqrt{T}}\right)$.

\noindent We conclude the section with the proof of Lemma~\ref{l:31}.
\begin{proof}[Proof of Lemma~\ref{l:31}]
We first establish that $\norm{p_{t+1} - p_t } \leq \norm{z_t - z_{t-1}}$. Notice that once this is established, Lemma~\ref{l:31} intuitively follows since $z_t \simeq p_t$.
\begin{eqnarray*}
\norm{p_{t+1} - p_t }^2 &=& \norm{z_t -\gamma F(p_{t+1}) - z_{t-1} +\gamma F(p_{t})}^2\\
&=& \norm{z_{t-1} - z_t }^2 + 2\gamma (F(p_{t}) - F(p_{t+1}))^\top (z_t - z_{t-1}) + \gamma^2 \norm{ F(p_{t}) - F(p_{t+1})}^2\\
&\leq& \norm{z_{t-1} - z_t }^2 + 2\gamma (F(p_{t}) - F(p_{t+1}))^\top (z_t - \gamma F(p_{t+1}) - z_{t-1}+\gamma F(p_{t+1}))\\
&=& \norm{z_{t-1} - z_t }^2 + 2\gamma (F(p_{t}) - F(p_{t+1}))^\top (p_{t+1} - p_t) \\
&\leq& \norm{z_{t-1} - z_t }^2
\end{eqnarray*}
\noindent Equation~\ref{eq:1} implies that $\norm{p_{t+1} -z_t} \leq 2\sqrt{2}\norm{z_0 - z^\ast}$ and thus by Equation~\ref{eq:45} we get that $\norm{z_{t+1} - p_{t+1}} \leq 2\sqrt{2}\norm{z_0 - z^\ast}/2^k$. Finally by the triangle inequality we get that $\norm{p_{t+1} - p_t},\norm{z_{t+1} - z_t} \leq 4\sqrt{2}\norm{z_0 - z^\ast}$. Having established the above bounds we conclude with the proof of Lemma~\ref{l:31}.
\begin{eqnarray*}
\norm{p_{t+1} - z_t}^2 &\leq& \norm{p_{t+1} - p_t}^2 + \norm{p_{t} - z_t}^2 + 2\norm{p_t - z_t}\norm{p_t - p_{t+1}}\\
&\leq& \norm{z_{t} - z_{t-1}}^2 + 72\norm{z_0 - z^\ast}^2/2^k\\
&\leq& \norm{p_{t} - z_{t-1}}^2 + \norm{z_t - p_t}^2 + 2\norm{z_t - p_t}\norm{p_t - z_{t-1}} + 72\norm{z_0 - z^\ast}^2/2^k\\
&=& \norm{p_{t} - z_{t-1}}^2 + 144\norm{z_0 - z^\ast}^2/2^k
\end{eqnarray*}
As a result, $\norm{p_{T} - z_{T-1}}^2 \leq  \norm{p_{t+1} - z_t}^2 + 144T\cdot \norm{z_0 - z^\ast}^2/2^k$.
\end{proof}


\section*{Acknowledgments}
This project has received funding from the European Research Council (ERC) under the European Union's Horizon 2020 research and innovation program (grant agreement n° $725594$), the Swiss National Science Foundation (SNSF) under grant number $200021\_205011$ and Innosuisse.

\bibliography{refs}
\bibliographystyle{plain}

\appendix

\section{Omitted Proofs of Section~\ref{s:contraction}}\label{s:average_iterate_PP}
\noindent In this section we present the proofs of Theorem~\ref{l:time_average_PP} and Theorem~\ref{l:GD21} characterizing the convergence properties of the Proximal Point method.

\begin{reptheorem}{l:time_average_PP}
Consider the sequence $z_0,z_1,\ldots, z_T\in \mathcal{D}$ such that $z_{t+1} \in \mathrm{PP}^\gamma(z_t)$. Then,
\[\sum_{t=0}^{T-1} \langle F(z_{t+1}) , z_{t+1} - z \rangle \leq \frac{|\mathcal{D}|^2}{2 \gamma} \]
for all $z \in \mathcal{D}$. Equivalently the time-averaged pair of points $(\hat{x},\hat{y}):= \sum_{t=0}^{T-1}z_{t+1} / T$ satisfies,
\[f(\hat{x},y) - \frac{|\mathcal{D}|^2}{2\gamma T} \leq f(\hat{x},\hat{y}) \leq f(x,\hat{y}) + \frac{|\mathcal{D}|^2}{2\gamma T}\]
for all $x,y \in \mathcal{D}_1\times \mathcal{D}_2$.
\end{reptheorem}
\begin{proof}
Since $z_{t+1} = \left[z_t - \gamma \cdot F(z_{t+1}) \right]_{\mathcal{D}}$ then
\[ \left\langle x_t - \gamma \cdot F(z_{t+1}) - z_{t+1}, z - z_{t+1}\right \rangle \leq 0~~~\text{for all }z \in \mathcal{D}\]
\noindent As a result, we get that
\begin{eqnarray*}
\sum_{t=0}^{T-1}\left[\langle F(z_{t+1}) , z_{t+1} - z \rangle\right]   &\leq& \sum_{t=0}^{T-1}
\frac{\langle z_t - z_{t+1} , z_{t+1} -z \rangle}{ \gamma}\\
&=& \sum_{t=0}^{T-1}\left[\frac{\norm{z_t - z}^2}{2\gamma} - \frac{\norm{z_{t+1} - z}^2}{2\gamma} - \frac{\norm{z_{t+1} - z_t}^2}{2\gamma}\right]\\
&\leq& \frac{\norm{z_0 - z}^2}{2 \gamma} - \sum_{t=0}^{T-1} \frac{\norm{z_{t+1} - z_t}^2}{2\gamma}
\end{eqnarray*}
where the second equality comes from the three-point identity
$\langle a - b , c - b\rangle = \frac{\norm{a-c}^2}{2} - \frac{\norm{a-b}^2}{2} - \frac{\norm{b-c}^2}{2}$.
The proof of Theorem~\ref{l:time_average_PP} directly follows by Lemma~\ref{l:time_average_PP} and Corollary~\ref{c:1}.
\end{proof}

\begin{reptheorem}{l:GD21}\cite{GPDO20}
Consider the sequence $z_0,z_1,\ldots, z_T\in \mathbb{R}^{n+m}$ such that $z_{t+1} = z_t - \gamma F(z_{t+1})$. In case there exists $z^\ast \in \mathbb{R}^{n+m}$ such that $\norm{F(z^\ast)} =0$ then,
\[\norm{F(z_{T})} \leq \frac{\norm{z_{0} - z^\ast}^2}{\gamma \sqrt{T}}.\]
\end{reptheorem}
\noindent The proof of Theorem~\ref{l:GD212} follows easily by the Lemma~\ref{l:GD21} established in \cite{GPDO20} showing that the distance covered by the Proximal Point method strictly decreases.
\begin{lemma}\label{l:GD212}
Let $z_{t+1} = z_t - \gamma F(z_{t+1})$ and $z_{t} = z_{t-1} - \gamma F(z_t)$. Then,
\[\norm{z_{t+1} - z_t} \leq \norm{z_{t} - z_{t-1}}.\]
\end{lemma}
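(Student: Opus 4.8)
The plan is to expand $\norm{z_{t+1}-z_t}^2$ directly from the two implicit update equations and then reduce the extra terms to a monotonicity expression via Lemma~\ref{l:monotonicity}. This is essentially the computation already performed for the $p$-iterates inside the proof of Lemma~\ref{l:31}, only cleaner, since here the points satisfy the exact (unconstrained) proximal recursion.

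First I would subtract the two update rules to get the displacement identity
\[
z_{t+1}-z_t \;=\; (z_t-z_{t-1}) \;-\; \gamma\bigl(F(z_{t+1})-F(z_t)\bigr),
\]
and, rearranging the very same identity, $z_t - z_{t-1} = (z_{t+1}-z_t) + \gamma\bigl(F(z_{t+1})-F(z_t)\bigr)$. Squaring the first identity yields
\[
\norm{z_{t+1}-z_t}^2 \;=\; \norm{z_t-z_{t-1}}^2 \;-\; 2\gamma\bigl\langle z_t-z_{t-1},\, F(z_{t+1})-F(z_t)\bigr\rangle \;+\; \gamma^2\norm{F(z_{t+1})-F(z_t)}^2,
\]
so it suffices to show the last two terms are nonpositive. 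Substituting the rearranged identity into the inner product, the cross term becomes $-2\gamma\langle z_{t+1}-z_t,\,F(z_{t+1})-F(z_t)\rangle - 2\gamma^2\norm{F(z_{t+1})-F(z_t)}^2$; adding the leftover $+\gamma^2\norm{F(z_{t+1})-F(z_t)}^2$ leaves
\[
\norm{z_{t+1}-z_t}^2 \;=\; \norm{z_t-z_{t-1}}^2 \;-\; 2\gamma\bigl\langle z_{t+1}-z_t,\, F(z_{t+1})-F(z_t)\bigr\rangle \;-\; \gamma^2\norm{F(z_{t+1})-F(z_t)}^2.
\]
The inner product is nonnegative by monotonicity of $F$ (Lemma~\ref{l:monotonicity}) and the last term is trivially nonpositive, so $\norm{z_{t+1}-z_t}^2 \le \norm{z_t-z_{t-1}}^2$, which is the claim.

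There is no genuine obstacle here: the argument is a two-line algebraic manipulation followed by an application of monotonicity. The only point worth flagging is the ``trick'' of expressing $z_t-z_{t-1}$ as $(z_{t+1}-z_t)$ plus a $\gamma\bigl(F(z_{t+1})-F(z_t)\bigr)$ correction, which is exactly what is needed to turn the stray cross term into the monotone inner product $\langle z_{t+1}-z_t,\,F(z_{t+1})-F(z_t)\rangle$; this works precisely because the updates are implicit. Iterating this inequality $T$ times and combining with $\gamma\,\norm{F(z_T)} = \norm{z_T - z_{T-1}}$ and a telescoping bound on $\sum_t\norm{z_{t+1}-z_t}^2$ then gives Theorem~\ref{l:GD21}.
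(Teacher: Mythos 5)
Your proof is correct and uses essentially the same approach as the paper: write the difference $z_{t+1}-z_t$ as $(z_t-z_{t-1})-\gamma(F(z_{t+1})-F(z_t))$, expand the square, and fold the remaining terms into the monotone inner product $\langle z_{t+1}-z_t, F(z_{t+1})-F(z_t)\rangle$. The only cosmetic difference is that you keep the exact identity $\norm{z_{t+1}-z_t}^2 = \norm{z_t-z_{t-1}}^2 - 2\gamma\langle z_{t+1}-z_t, F(z_{t+1})-F(z_t)\rangle - \gamma^2\norm{F(z_{t+1})-F(z_t)}^2$, whereas the paper discards the $-\gamma^2\norm{\cdot}^2$ term via a harmless inequality; both give the claim immediately.
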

\begin{proof}
\begin{eqnarray*}
\norm{z_{t+1} - z_t }^2 &=& \norm{z_t -\gamma F(z_{t+1}) - z_{t-1} +\gamma F(z_{t})}^2\\
&=& \norm{z_{t-1} - z_t }^2 + 2\gamma (F(z_{t}) - F(z_{t+1}))^\top (z_t - z_{t-1}) + \gamma^2 \norm{ F(z_{t}) - F(z_{t+1})}^2\\
&\leq& \norm{z_{t-1} - z_t }^2 + 2\gamma (F(z_{t}) - F(z_{t+1}))^\top (z_t - \gamma F(z_{t+1}) - z_{t-1}+\gamma F(z_{t}))\\
&=& \norm{z_{t-1} - z_t }^2 + 2\gamma (F(z_{t}) - F(z_{t+1}))^\top (z_{t+1} - z_t) \leq \norm{z_{t-1} - z_t }^2\\
\end{eqnarray*}
where the last inequality follows by Lemma~\ref{l:monotonicity}.
\end{proof}
\noindent We conclude the section with the proof of Theorem~\ref{t:last-iterate}. Applying Lemma~\ref{l:time_average_PP} for $z = z^\ast$ we get that 
\[\sum_{t=0}^{T-1} \frac{\norm{z_{t+1} - z_t}^2}{2\gamma} \leq  \frac{\norm{z_0 - z}^2}{2 \gamma} + \sum_{t=0}^{T-1} \underbrace{\langle F(z_{t+1} , z^\ast - z_{t+1}\rangle}_{\leq 0}\] 
Combining the fact that $\langle F(z_{t+1}) - F(z^\ast), z_{t+1} - z^\ast ) \geq 0$ (monotonicity of $F(\cdot)$ in Lemma~\ref{l:monotonicity}) with $F(z^\ast) =0$ we get that $\langle F(z_{t+1} , z^\ast - z_{t+1}\rangle \leq 0$. As a result,
\[ \gamma^2T \cdot \norm{F(z_{T})}^2 = T \cdot \norm{z_{T} - z_{T-1}}^2 \leq \norm{z_0 - z^\ast}^2.\]

\end{document}